\def\Prob{\mathop{\it Prob}\nolimits}
\def\nat{{\mathbb N}}
\def\real{{\mathbb R}}
\newcommand{\support}{\mbox{\tt support}}
\theoremstyle{plain}
\newtheorem{theorem}[thm]{Theorem}
\newtheorem{corollary}[thm]{Corollary}
\newtheorem{lemma}[thm]{Lemma}
\newtheorem{proposition}[thm]{Proposition} 
\newcommand{\rfotwo}
{\mbox{\rm FO}^2[<]}
\newcommand{\tl}{\mathrm{TL}}            
\newcommand{\utl}
{\mbox{{\rm unary-}}\tl}
\title{Multi-Objective Model Checking of \\Markov Decision
Processes\rsuper *}
\author[K. Etessami]{Kousha Etessami\rsuper a}
\address{{\lsuper a}LFCS, School of Informatics, University of Edinburgh, UK}
\email{kousha@inf.ed.ac.uk}
\author[M. Kwiatkowska]{Marta Kwiatkowska\rsuper b}
\address{{\lsuper b}Computing Laboratory, Oxford University, UK}
\email{Marta.Kwiatkowska@comlab.ox.ac.uk}
\author[M. Y. Vardi]{Moshe Y. Vardi\rsuper c}
\address{{\lsuper c}Department of Computer Science, Rice University, USA}
\email{vardi@cs.rice.edu}
\author[M. Yannakakis]{Mihalis Yannakakis\rsuper d}
\address{{\lsuper d}Department of Computer Science, Columbia University, USA}
\email{mihalis@cs.columbia.edu}
\keywords{Markov Decision Processes, Model Checking, Multi-Objective
  Optimization} 
\subjclass{G.3, F.2, F.3.1, F.4.1}
\def\doi{4 (4:8) 2008}
\begin{document}

\maketitle

\begin{abstract}
We study and provide efficient algorithms for 
multi-objective model checking problems for 
Markov Decision Processes (MDPs).  
Given an MDP, $M$, and
given multiple linear-time ($\omega$-regular or LTL) properties 
$\varphi_i$, 
and probabilities $r_i \in [0,1]$,
$i=1,\ldots,k$,
we ask whether there exists a strategy $\sigma$ for the controller
such that, for all $i$, the probability that
a trajectory of $M$ controlled by $\sigma$
satisfies $\varphi_i$ is at least $r_i$.
We provide an  algorithm that decides whether there exists 
such a strategy and if so produces it, and which
runs in time polynomial in the size of the MDP. 
Such a strategy may require the use of both randomization and memory.
We also consider more general multi-objective $\omega$-regular queries,
which we motivate with 
an application to assume-guarantee compositional reasoning 
for probabilistic systems.

Note that there can be trade-offs between different properties:
satisfying property $\varphi_1$ with high probability
may necessitate satisfying $\varphi_2$ with low probability. 
Viewing this as a multi-objective optimization problem,
we want information about 
the ``trade-off curve'' or {\em Pareto curve} for maximizing the 
probabilities of different properties. 
We show that one can compute an
approximate
Pareto curve with respect to a set of
$\omega$-regular properties in time polynomial 
in the size of the MDP.

Our quantitative upper bounds use LP methods.
We also study qualitative multi-objective model checking problems,
and we show that these can be analysed by purely graph-theoretic
methods, even though the strategies may still require 
both randomization and memory. 
\end{abstract}

\vfill\eject
\section{Introduction}

Markov Decision Processes (MDPs) are standard models for stochastic
optimization and for 
modelling systems with probabilistic and nondeterministic or controlled behavior 
(see \cite{Puterman94,Var85,CY95,CY98}).  
In an MDP, at each
state, the controller can choose from among a number of
actions, or choose a probability distribution over actions.  Each action
at a state
determines a probability distribution on the next state.
Fixing an initial state and fixing the 
controller's strategy determines 
a probability space of infinite runs (trajectories) of the MDP.  
For MDPs with a single objective, 
the controller's goal is to
optimize the value of an objective function, or payoff, which 
is a function of the entire trajectory.  
Many different objectives have been studied for MDPs,
with a wide variety of applications.  In particular, in
verification research  linear-time model checking of
MDPs has been studied, where the objective is to maximize the
probability that the trajectory satisfies a given 
$\omega$-regular or LTL property (\cite{CY98,CY95,Var85}).

In many settings we may not just care about a single
property.   Rather, we may have a number of different properties
and we may want to know whether we can simultaneously satisfy
all of them with given probabilities.
For example, in a system with a server and
two clients, we may want to maximize the
probability for both clients $1$ and $2$ of the temporal property:
``every request issued by client $i$ eventually
receives a response from the server'', $i=1,2$.  
Clearly, there may be a trade-off.  To increase this
probability for client $1$ we may have to decrease it for client $2$,
and vice versa.  We thus want to know what are the
simultaneously {\em achievable} pairs $(p_1,p_2)$ 
of probabilities for the two properties.
More specifically, we will be interested in the ``trade-off curve''
or {\em Pareto curve}.  The Pareto curve is the set of all
achievable vectors $p= (p_1,p_2) \in [0,1]^2$ 
such that there does not 
exist another achievable vector $p'$ that  {\em dominates} $p$, meaning
that $p \leq p'$ (coordinate-wise inequality) and $p \neq p'$.

\begin{figure*}[t]
\begin{center}
\scalebox{0.50}{\input{example_2mdp.pstex_t}}
\caption{An MDP with two objectives, $\Diamond P_1$ and $\Diamond P_2$, and the associated Pareto curve.}
\label{fig:example_2mdp}
\end{center}
\end{figure*}

Concretely, consider the very simple MDP depicted in Figure \ref{fig:example_2mdp}.
Starting at state $s$, we can take one of three possible
actions $\{a_1, a_2,a_3\}$.
Suppose we are interested in LTL
properties $\Diamond P_1$ and 
$\Diamond P_2$.
Thus we want to maximize 
the probability of reaching the two distinct vertices labeled by
$P_1$ and $P_2$, respectively.  
To maximize the
probability of $\Diamond P_1$ we should take action $a_1$,
thus reaching $P_1$ with probability $0.6$ and $P_2$ with probability $0$.
To maximize the probability of $\Diamond P_2$ we should take $a_2$,
reaching $P_2$ with probability $0.8$ and $P_1$ with probability $0$.
To maximize the {\em sum} total probability of reaching $P_1$ or 
$P_2$, we should take $a_3$, reaching both with probability $0.5$.
Now observe that we can also ``mix'' these pure 
strategies using randomization
to obtain any convex combination of these three value vectors.
In the graph on the right in Figure \ref{fig:example_2mdp}, the dotted line 
plots the Pareto curve for these two properties.

The Pareto curve ${\mathcal P}$ in general contains infinitely many points,
and it can be too costly to compute an exact representation for it
(see Section \ref{sec:basics}).
Instead of computing it outright we can try to {\em approximate} it 
(\cite{PapaYan00}).  An {\em $\epsilon$-approximate Pareto curve} is a set of 
achievable
vectors ${\mathcal P}(\epsilon)$ such that for every achievable vector $r$ 
there is some vector $t \in {\mathcal P}(\epsilon)$ which ``almost'' dominates it,
meaning  $r \leq (1+ \epsilon) t$.

In general, given a labeled MDP $M$, 
$k$ distinct $\omega$-regular properties, 
$\Phi = \langle \varphi_i \mid i=1,\ldots,k\rangle$,
a start state $u$, and a strategy $\sigma$, 
let $\Pr^\sigma_u(\varphi_i)$
denote the probability that starting at $u$, using strategy $\sigma$,
the trajectory satisfies $\varphi_i$.
For a strategy $\sigma$, define the 
vector $t^\sigma = ( t^\sigma_1,\ldots, t^\sigma_k)$,
where $t^\sigma_i = \Pr^\sigma_u(\varphi_i)$, for $i=1,\ldots,k$.
We say a value vector $r \in [0,1]^k$
is {\em achievable} for $\Phi$,
if there exists a strategy $\sigma$ such that $t^\sigma \geq r$.

We provide an algorithm that given MDP $M$, start state $u$, 
properties $\Phi$, 
and rational value vector $r \in [0,1]^k$,
decides whether $r$ is achievable, 
and if so produces a strategy $\sigma$
such that $t^{\sigma} \geq r$.
The algorithm runs in time polynomial in the size of the
MDP.  The strategies may require both randomization and memory.
Our algorithm works by first reducing the achievability problem
for multiple 
$\omega$-regular properties to  
one with  multiple reachability objectives,
and then reducing the multi-objective reachability problem 
to a multi-objective linear programming problem.
We also show that one can compute an 
$\epsilon$-approximate Pareto
curve for $\Phi$ in time polynomial
in the size of the MDP and in $1/\epsilon$.
To do this, we use our linear programming characterization
for achievability, and 
use results from \cite{PapaYan00} on approximating 
the Pareto curve for multi-objective linear programming problems.

We also consider more general {\em multi-objective queries}.
Given a boolean combination $B$ of quantitative predicates of the
form $\Pr^\sigma_u(\varphi_i) \Delta p$, where $\Delta \in \{ \leq
,\geq , < , > , = , \neq \}$, and $p \in [0,1]$,  a {\em multi-objective
query} asks whether there exists a strategy $\sigma$ satisfying 
$B$ (or whether {\em all} strategies $\sigma$
satisfy $B$).  
It turns out that 
such queries 
are not really much more
expressive than checking achievability.  Namely, 
checking a fixed query $B$
can be reduced to checking a fixed number of {\em extended achievability}
queries, where for some of the coordinates $t^\sigma_i$ we
can ask for a strict inequality, i.e., that $t^\sigma_i > r_i$.
(In general, however, the number and size of the
extended achievability queries needed 
may be exponential  in the size of $B$.)
A motivation for allowing
general multi-objective queries is to
enable {\em assume-guarantee compositional reasoning} for probabilistic 
systems,
as explained in Section \ref{sec:basics}.

Whereas our algorithms for quantitative problems use LP methods,
we also consider qualitative  multi-objective queries.
These are queries given by boolean combinations of 
predicates of the form $\Pr^\sigma_u(\varphi_i) \Delta b$,
where $b \in \{0,1\}$.   
We give an algorithm  using purely graph-theoretic techniques 
that decides whether there is a strategy
that satisfies a qualitative multi-objective query, 
and if so produces such a strategy.
The algorithm runs in
time polynomial in the size of the MDP.
Even for satisfying qualitative queries the strategy may need to  
use both randomization and memory.

In typical applications, the MDP is far larger than the 
size of the query.
Also, $\omega$-regular properties can be presented in many ways,
and it was already shown in \cite{CY95} that 
the query complexity of model
checking MDPs against even a single LTL property is 2EXPTIME-complete.
We remark here that, if properties are expressed via LTL
formulas, then our algorithms run in polynomial time 
in the size of the MDP and in 2EXPTIME in the size of the query,
for deciding arbitrary multi-objective
queries, where both the MDP and the query are part of the input.
So, the worst-case upper bound is the same as with a single LTL objective.
However, to keep  our complexity analysis simple, we
focus in this paper on the model complexity of our algorithms,
rather than their query complexity or combined complexity.

\noindent {\bf Related work.}  
Model checking of MDPs with a single $\omega$-regular objective
has been studied in detail (see \cite{CY98,CY95,Var85}).
In \cite{CY98}, Courcoubetis and Yannakakis also considered
MDPs with a single objective given by a positive weighted sum of
the probabilities of multiple $\omega$-regular properties,
and they showed how to efficiently optimize such objectives for MDPs.
They did not consider tradeoffs between multiple 
$\omega$-regular objectives.  We employ and build on techniques
developed in \cite{CY98}.

Multi-objective optimization is
a subject of intensive study
in Operations Research and related fields 
(see, e.g., \cite{Ehrgott05,Clima97}).  
Approximating the Pareto curve for general multi-objective optimization
problems was considered by Papadimitriou and Yannakakis in  
\cite{PapaYan00}.
Among other results, 
\cite{PapaYan00} showed that for multi-objective linear programming 
(i.e., linear constraints
and multiple linear objectives), one can 
compute a (polynomial sized) $\epsilon$-approximate Pareto curve 
in time polynomial in the size of the LP and in $1/\epsilon$.

Our work is related to recent work by Chatterjee, Majumdar, and Henzinger
(\cite{CMH06}), who 
considered MDPs with multiple discounted reward objectives.
They showed that randomized but memoryless strategies suffice for 
obtaining any achievable value vector for these objectives, and 
they reduced the multi-objective optimization and achievability 
(what they call {\em Pareto realizability}) problems 
for MDPs with discounted rewards to multi-objective
linear programming.  They were thus able to apply the results
of \cite{PapaYan00} in order to approximate the Pareto curve
for this problem.
We work in an undiscounted setting, where objectives can 
be arbitrary $\omega$-regular properties.
In our setting, strategies may require both randomization and memory 
in order to achieve a given value vector.
As described earlier, our algorithms first reduce multi-objective
$\omega$-regular problems to multi-objective reachability problems,
and we then solve multi-objective reachability problems 
by reducing them to multi-objective LP.
For multi-objective reachabilility,
we show randomized memoryless strategies do suffice.
Our LP methods for multi-objective reachability
are closely related to the LP methods used in \cite{CMH06} 
(and see also, e.g., \cite{Puterman94}, Theorem 6.9.1.,  where 
 a related result about discounted MDPs is established).
However, in order to establish the results in our undiscounted setting,
even
for reachability we have to overcome some new obstacles
that do not arise in the discounted case.  In particular,
whereas the ``discounted frequencies'' used in \cite{CMH06} 
are always well-defined finite
values under all strategies, the analogous undiscounted 
frequencies or
``expected number of visits'' can in general be infinite
for an arbitrary strategy.
This forces us to preprocess the MDPs in such a way that ensures 
that a certain family of
undiscounted stochastic flow equations has a 
finite solution which corresponds to the ``expected number of visits''
at each state-action pair under a given (memoryless) strategy. 
It also forces us to give
a quite different proof that memoryless strategies suffice
to achieve any achievable vector for multi-objective reachability, based 
on the convexity of the memorylessly achievable set. 

Multi-objective MDPs have also been studied
extensively in the OR and stochastic control 
literature (see e.g. \cite{Furukawa80,White82,Henig83,Ghosh90,WakTog98}).
Much of this work is typically concerned with  
discounted reward
or long-run average reward models, 
and does not focus on 
the complexity of algorithms.  
None of this work seems to directly imply even our 
result that for multiple reachability objectives checking achievability 
of a value vector can 
be decided in polynomial time, 
not to mention
the more general results for multi-objective model checking.

\vspace*{-0.1in}

\section{Basics and background}
\label{sec:basics}

A finite-state MDP $M = (V,\Gamma,\delta)$ consists of a finite
set $V$ of states, an action alphabet $\Gamma$,
and a transition relation $\delta$.
Associated with each state $v$ is a set of 
enabled actions $\Gamma_v \subseteq \Gamma$.
The transition relation is given by
$\delta \subseteq V \times \Gamma \times [0,1] \times V$.
For each state $v \in V$, each enabled action $\gamma \in \Gamma_v$,
and every state $v' \in V$,
we have at most one transition $(v,\gamma,p_{(v,\gamma,v')},v') \in \delta$,
for some probability $p_{(v,\gamma,v')} \in (0,1]$,  
such that $\sum_{v' \in V} p_{(v,\gamma,v')} = 1$. 
When there is no transition $(v,\gamma,p_{(v,\gamma,v')},v')$,
we may, only for notational convenience, sometimes assume that
there is a probability $0$ transition, i.e., that $p_{(v,\gamma,v')} = 0$.
(But such redundant probability $0$ transitions 
 are not part of the actual input.)
Thus, at each state, each enabled action determines a probability distribution
on the next state.
There are no other transitions, so no transitions on disabled actions.  
We assume every state $v$ has some enabled action,
i.e., $\Gamma_v \neq \emptyset$, so there are no dead ends.
For our complexity analysis, we assume of course that 
all probabilities $p_{(v,\gamma,v')}$
are rational.  
There are other ways to present MDPs, e.g., by
separating controlled
and probabilistic nodes into distinct states. 
The different presentations are equivalent and efficiently translatable 
to each other.

A labeled MDP $M= (V,\Gamma,\delta,l)$ has, in addition,  
a
set of propositional predicates $Q = \{ Q_1, \ldots, Q_r\}$ 
which label the states.
We view this as being given by a labelling function $l: V \mapsto \Sigma$, where $\Sigma = 2^Q$.
We define the encoding size of a (labeled) MDP $M$, denoted by $|M|$,
to be the total size required to encode 
all transitions and their
rational probabilities, where rational values are encoded
with numerator and denominator given in binary, as well as all
state labels.

For a labeled MDP $M = (V,\Gamma,\delta,l)$ with a given initial 
state $u \in V$,  which we denote by $M_u$, 
runs of $M_u$ are infinite sequences of states 
$\pi = \pi_0 \pi_1 \ldots \in V^\omega$, where $\pi_0 = u$ and
for all $i \geq 0$, $\pi_i \in V$ 
and there is a transition
$(\pi_i, \gamma, p, \pi_{i+1}) \in \delta$,
for some $\gamma \in \Gamma_{\pi_i}$ and some probability $p > 0$.  
Each run induces an $\omega$-word
over $\Sigma$, namely $l(\pi) 
\doteq l(\pi_0) l(\pi_1) \ldots \in \Sigma^\omega$.

A {\em strategy} is a function $\sigma: (V \Gamma)^* V \mapsto {\mathcal D}(\Gamma)$, 
which maps a finite history of play 
to a probability distribution on the  
next action.
Here ${\mathcal D}(\Gamma)$ denotes the set of probability distributions on
the set $\Gamma$.
Moreover, it must be the case that for all 
histories $wu$, 
$\sigma(wu) \in {\mathcal D}(\Gamma_u)$, i.e., the probability distribution
has support only over the actions available at state $u$.
A strategy is {\em pure} if $\sigma(wu)$ has support on exactly
one action, i.e., with probability 1 
a single action is played at every history.  
A strategy is {\em memoryless} (stationary) if the
strategy depends only on the last state, i.e., 
if $\sigma(wu) = \sigma(w'u)$ for all
$w, w' \in (V\Gamma)^*$.   If $\sigma$ is memoryless, we can simply define
it as a function $\sigma: V \mapsto {\mathcal D}(\Gamma)$.
An MDP $M$ with initial state $u$, together with a strategy $\sigma$,
naturally induces a Markov chain $M^\sigma_u$, whose states
are the histories of play in $M_u$, and
such that from state
$s = wv$  if $\gamma \in \Gamma_v$, there is a transition
to state $s' = wv\gamma v'$ with probability 
$\sigma(wv)(\gamma) \cdot p_{(v,\gamma,v')}$.
A run $\theta$ in $M^\sigma_u$
is thus given by a sequence $\theta = \theta_0 \theta_1 \ldots$, where 
$\theta_0 = u$ and each
$\theta_i \in (V\Gamma)^*V$, for all $i \geq 0$.
We associate to each history $\theta_i = w v$ the 
label of its last state $v$. In other words,
we overload the notation and define $l(wv) \doteq l(v)$.
We likewise associate with each 
run $\theta$ the $\omega$-word 
$l(\theta) \doteq  l(\theta_0) l(\theta_1) \ldots$.
Suppose we are given $\varphi$,  
an LTL formula or B\"{u}chi
automaton, or any other formalism for expressing an
$\omega$-regular language over alphabet $\Sigma$. 
Let $L(\varphi) \subseteq \Sigma^\omega$ 
denote the language  expressed by $\varphi$. 
We write 
$\Pr^\sigma_u(\varphi)$ to denote the probability that
a trajectory $\theta$ of $M^\sigma_u$ satistifies $\varphi$, i.e.,  
that $l(\theta) \in L(\varphi)$.
For generality, 
rather than just allowing an initial vertex $u$
we allow an initial probability distribution $\alpha \in
{\mathcal D}(V)$.
Let $\Pr^\sigma_{\alpha}(\varphi)$ denote the probability that
under strategy $\sigma$, starting with initial distribution $\alpha$,
we will satisfy $\omega$-regular property $\varphi$.
These probabilities are well defined because the set of
such runs is  Borel measurable  (see, e.g., \cite{Var85,CY95}).

As in the introduction, 
for a $k$-tuple of $\omega$-regular properties 
$\Phi = \langle \varphi_1, \ldots, \varphi_k \rangle$,
given a strategy $\sigma$,
we let $t^\sigma = ( t^\sigma_1,\ldots, t^\sigma_k)$,
with $t^\sigma_i = \Pr^\sigma_u(\varphi_i)$, for $i=1,\ldots,k$.
For MDP $M$ and starting state $u$, 
we define the {\em achievable} set of value vectors
with respect to $\Phi$ to be 
$U_{M_u,\Phi} = \{ r \in \real_{\geq 0}^k \mid 
\mbox{ $\exists \sigma$ such that $t^\sigma \geq r$}\}$.
For a set 
$U \subseteq \real^k$, we define a subset ${\mathcal P} \subseteq U$ of it, called the
{\em Pareto curve} or the {\em Pareto set} of $U$, consisting of 
the set of 
{\em Pareto optimal} (or
 {\em Pareto efficient}) vectors inside $U$.
A vector $v \in U$ is called {\em Pareto optimal} if
$\neg \exists v'  (v' \in U  \wedge v \leq v' \wedge
v \neq v')$.
Thus
${\mathcal P} =  \{ v \in U \mid \mbox{$v$ is Pareto optimal}  \}$.
We use ${\mathcal P}_{M_u,\Phi} \subseteq U_{M_u,\Phi}$ 
to denote the Pareto curve of $U_{M_u,\Phi}$.

It is clear, e.g., from Figure \ref{fig:example_2mdp}, that the 
Pareto curve is in general an infinite set.
In fact, it follows from our results that for general $\omega$-regular
objectives the Pareto set is 
a convex polyhedral set.
In principle, we may want to compute some kind of exact representation
of this set by, e.g., 
enumerating all the vertices 
(on the upper envelope) of the 
polytope that defines the Pareto curve, or enumerating the facets 
that define it.
It is not possible to do this in polynomial-time in general.
In fact, the following theorem holds:
\begin{theorem}
There is a family of MDPs, $\langle M(n) \mid n \in \nat \rangle$, where 
$M(n)$ has $n$ states and size $O(n)$, such that for $M(n)$ 
the Pareto curve for
two reachability objectives, $\Diamond P_1$ and $\Diamond P_2$,
contains $n^{\Omega(\log n)}$ vertices (and thus $n^{\Omega(\log n)}$ facets).
\end{theorem}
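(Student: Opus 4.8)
The plan is to recast the vertex-counting problem as one about breakpoints of a single-parameter optimal-value function, and then to realize a super-polynomial number of breakpoints by a balanced recursive composition of small reachability gadgets.

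First I would fix the two objectives $\Diamond P_1, \Diamond P_2$ and, for $\lambda \in [0,1]$, consider the scalarized problem of maximizing $(1-\lambda)\Pr^\sigma_u(\Diamond P_1) + \lambda \Pr^\sigma_u(\Diamond P_2)$ over all strategies $\sigma$. Writing $g(\lambda)$ for this optimum, $g$ is the pointwise maximum over $\sigma$ of the affine-in-$\lambda$ functions $\lambda \mapsto (1-\lambda)t^\sigma_1 + \lambda t^\sigma_2$, hence $g$ is piecewise linear and convex. The direction $(1-\lambda,\lambda)$ is a supporting direction of the achievable set $U_{M_u,\Phi}$, so at each $\lambda$ the maximizer lies on the Pareto curve, each maximal linear piece of $g$ exposes one edge of the upper envelope of $U_{M_u,\Phi}$, and each breakpoint exposes one vertex. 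Consequently the number of vertices of ${\mathcal P}_{M_u,\Phi}$ equals, up to an additive constant, the number of breakpoints of $g$, and it suffices to construct $M(n)$ for which $g$ has $n^{\Omega(\log n)}$ breakpoints.

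To produce many breakpoints I would build $M(n)$ as a composition (a ``circuit'') of two-input reachability gadgets. A gadget is a small acyclic MDP fragment with two distinguished exit states; if those exits are fed values $u(\lambda)$ and $v(\lambda)$ --- by identifying them with $P_1,P_2$ or with the exits of previously built gadgets --- the gadget computes at its entry an optimal value that is a fixed convex piecewise-linear function $G(u(\lambda),v(\lambda))$ of its two inputs. The crucial point, and the reason acyclicity does not trap us at polynomially many pieces, is that composing such gadgets \emph{multiplies} breakpoint counts rather than merely adding them: if the inner value function has $a$ linear pieces and each of them sweeps across the full input range of the next gadget, then each inner piece meets all $b-1$ breakpoints of the outer gadget, yielding roughly $a\cdot b$ breakpoints. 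Arranging the gadgets into a balanced composition tree of depth $\Theta(\log n)$, with the widths at the levels chosen so that the total number of states is $O(n)$, the breakpoint count obeys a recurrence of the form $B(N)\ge \tfrac{N}{2}\,B(N/2)$, whose solution is $N^{\Theta(\log N)}$. Each transition probability used is a fixed rational of polynomially many bits, so the family has size $O(n)$, and by the multi-objective reachability results established earlier every exposed extreme point is genuinely achievable, hence lies on ${\mathcal P}_{M_u,\Phi}$.

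The hard part will be the multiplication step: I must guarantee that breakpoints genuinely compound rather than collapse or merge. This requires an inductive invariant controlling, for the level-$k$ gadget, both the $\lambda$-locations of its breakpoints and the range of slopes of its value function, so that when it is plugged into the next level (i) each of its linear pieces maps onto the entire active parameter window of the enclosing gadget, thereby meeting all of that gadget's breakpoints, and (ii) the resulting pieces acquire pairwise distinct slopes and no three of the exposed value vectors become collinear or dominated. Maintaining this invariant forces a careful quantitative choice of the gadgets' transition probabilities --- generic rationals chosen to avoid the measure-zero set of coincidences --- while keeping their bit-length logarithmic so the size stays $O(n)$; verifying that the induced value equations remain valid probability distributions and that the claimed interleaving of slopes survives composition is the delicate, calculation-heavy core of the argument. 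Once the invariant is established, unrolling the depth-$\Theta(\log n)$ recursion yields the promised $n^{\Omega(\log n)}$ vertices, and since in two dimensions the numbers of vertices and edges of a convex curve differ by at most one, the same bound holds for facets.
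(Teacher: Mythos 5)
Your high-level framing is correct: given that the achievable set is a convex polytope (which follows from the paper's LP characterization, Theorem \ref{thm:red-to-multi-LP}), counting Pareto vertices is indeed equivalent, up to an additive constant, to counting breakpoints of the scalarized optimum $g(\lambda)$, and the architecture you sketch --- breakpoints multiplying across a logarithmic-depth layered composition with geometric scaling --- is in essence the skeleton of the known lower-bound constructions for parametric/bi-objective shortest paths. The paper does not re-derive that combinatorial core: it cites it (Carstensen \cite{Carst83}, Mulmuley--Shah \cite{MulSha01}), takes the layered graph $G_n$ whose bi-objective $s$-$t$ path Pareto curve has $n^{\Omega(\log n)}$ vertices as given, and supplies only the reduction to MDP reachability, mapping each edge cost affinely into reachability probabilities via $r(u,v) = 2^i(2h-c(u,v))/(8h\,2^n)$ with continuation probability $1/2$ per layer, so that every pure path strategy $\pi$ attains $\Pr(\Diamond R) = a - b\,c(\pi)$ and the two Pareto curves correspond vertex-for-vertex (with Theorem \ref{thm:red-to-multi-LP} guaranteeing that Pareto vertices are realized by pure memoryless strategies, i.e.\ paths).

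The genuine gap in your proposal is that the multiplication step --- the entire quantitative content of the theorem --- is asserted but never established. Your appeal to ``generic rationals chosen to avoid the measure-zero set of coincidences'' cannot deliver your property (i), that every linear piece of the inner value function sweeps the \emph{entire} active input window of the enclosing gadget. Genericity can prevent accidental merges and collinearities, but window-covering is a strong engineered requirement, not a generic one: it is exactly what the $2^i$ geometric scaling in the known constructions is designed to achieve. Without it, composing convex piecewise-linear value functions generically \emph{adds} breakpoints (a series composition of an $a$-piece and a $b$-piece function yields at most about $a+b$ pieces, not $a\cdot b$), and your recurrence $B(N)\ge \tfrac{N}{2}B(N/2)$ has no support. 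There is a further MDP-specific obstruction you would need to handle: values at a state are maxima of convex combinations of successor values, so dynamic ranges contract by probability factors at each level, and maintaining the covering invariant through $\Theta(\log n)$ levels while keeping all transition probabilities valid is precisely the delicate argument you defer; note also that even the paper's probabilities require $\Theta(n)$-bit denominators (harmless only because $n^{\Omega(\log n)}$ absorbs polynomial reindexing), so your insistence on logarithmic bit-lengths is likely unachievable as stated. As written, your argument proves only the easy Pareto-vertex/breakpoint equivalence; to complete it you would either have to carry out the Carstensen/Mulmuley--Shah-style construction in full, or, as the paper does, cite it and supply the affine cost-to-probability embedding.
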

\begin{proof}
We will adapt and build on a known construction for the bi-objective shortest path problem
which shows that the Pareto curve for that problem can have 
$n^{\Omega(\log n)}$ vertices.
This was shown in \cite{Carst83} 
and a simplified proof (using a similar construction) was given in
\cite{MulSha01}.
(The constructions and theorems there are phrased in terms of parametric shortest
paths, but these are equivalent to bi-objective shortest paths.)
What those constructions show is that, for some polynomial $f$, and 
for every $n$, there is a graph $G_n$ with $f(n)$ nodes and
distinguished nodes $s$ and $t$, and such that
every edge $(u,v)$  has two (positive) costs $c(u,v)$ and $d(u,v)$,
which yield two cost functions $c(\cdot)$ and $d(\cdot)$ on the $s$-$t$ paths,
such that the Pareto curve of the $s$-$t$ paths under the two objectives has 
$n^{\Omega(\log n)}$ vertices (and edges).
An important property of the constructed graphs $G_n$ is that
they are acyclic and layered,
that is, the nodes are arranged in layers $L_0={s}, L_1,L_2,\ldots,L_n={t}$,
and all edges are from layer $L_i$ to $L_{i+1}$ for some $i \in \{0,\ldots,n-1\}$.

Building on this construction, we now construct the following instance $M_n$ of the MDP problem with two 
reachability objectives. The states of $M_n$ are the same as $G_n$ with 2 extra absorbing states:
the red state $R$, and the blue state $B$, which are the two target states
of our two reachability objectives.
For each state $u$ there is one action for each outgoing edge $(u,v)$;
if we choose this action then we transition with probability $r(u,v)$ to state $R$,
with probability $b(u,v)$ to $B$, with probability $1/2$ to $v$,
and with the remaining probability to $t$.
The probabilities $r(u,v)$ and $b(u,v)$ are defined as follows.
Let $h$ be the maximum $c$ or $d$ cost over all the edges.
For an edge $(u,v)$ where $u \in L_i$ (and $v \in L_{i+1}$),
set $$r(u,v) := \frac{2^i  (2h - c(u,v))}{ 8h 2^n }$$
and $$b(u,v) := \frac{2^i  (2h - d(u,v))}{8h 2^n }$$
Note that both these quantities are in the interval $[0, 1/4]$, so all probabilities are well-defined.

The claim is that there is a 1-1 correspondence between the vertices of the
Pareto curve of this MDP $M_n$ and the Pareto curve of the bi-objective
shortest path on $G_n$.
First we note that the vertices of the Pareto curve for the MDP correspond
to pure memoryless strategies (meaning that for each vertex of the Pareto curve
a pure memoryless strategy can achieve
the value vector that the vertex defines). 
The reason for this is that the vertices are optima for a linear combination
of the two objectives, and it follows from the proof of Theorem \ref{thm:red-to-multi-LP}, which we
shall show later, that these objectives have pure memoryless optimal strategies.

A pure strategy corresponds to a path from $s$ to $t$.
Let $\pi = s, u_1, u_2, ..., u_{n-1} t$ be such a path/strategy.
The probability that this strategy leads to the red node $R$ is
$r(s,u_1) + \ldots + \Prob(\mbox{reach node $u_i$})*r(u_i,u_i+1) + \ldots$
The probability that the process reaches node $u_i$ under the strategy $\pi$
is $1/2^i$, independent of the path.
Thus, $\Prob_\pi(\mbox{reach $R$}) = a - b*c(\pi)$, where $a,b$ are constants 
independent of the
path.
Similarly, $\Prob_\pi(\mbox{reach $B$}) = a - b*d(\pi)$.

It follows that minimizing the $c$ and $d$ costs of the paths is equivalent to
maximizing the probabilities of reaching $R$ and $B$, and this also
holds for any positive linear combination of the two respective objectives.
Thus, there is
a correspondence between their Pareto curves.  
\end{proof}

So, the Pareto curve is in general a polyhedral surface of superpolynomial
size, and thus cannot be constructed exactly in polynomial time.
We show, however, 
that the Pareto set can be efficiently {\em approximated}
to any desired accuracy $\epsilon > 0$.
An {\em $\epsilon$-approximate Pareto curve},
${\mathcal P}_{M_u,\Phi}(\epsilon) \subseteq  U_{M_u,\Phi}$,
is any achievable set such that 
$\forall r \in U_{M_u,\Phi} \; \exists  t \in {\mathcal P}_{M_u,\Phi}(\epsilon)$  such that $r \leq (1 + \epsilon) t$.  
When the subscripts $M_u$ and $\Phi$ 
are clear from the context, we will drop them
and use $U$, ${\mathcal P}$, and ${\mathcal P}(\epsilon)$ to denote the achievable set, 
Pareto set, and $\epsilon$-approximate Pareto set, respectively.

We also consider general {\em multi-objective queries}.
A {\em quantitative predicate} over $\omega$-regular  property $\varphi_i$ 
is a statement 
of the form $\Pr^\sigma_u(\varphi_i) \Delta p$, 
for some rational
probability $p \in [0,1]$, and where $\Delta$ is a comparison
operator $\Delta \in \{ \leq , \geq, <, >, = \}$.
Suppose $B$ is a boolean combination over such predicates.
Then, given $M$ and $u$, and $B$, we 
can ask whether there exists a 
strategy $\sigma$ 
such that $B$ holds, or whether $B$ holds for all $\sigma$.
Note that since $B$ can be put in DNF form, and the quantification
over strategies pushed into the disjunction, 
and since $\omega$-regular languages are closed under 
complementation, any query of the form $\exists \sigma B$
(or of the form $\forall \sigma B$)
can be transformed to a disjunction (a negated disjunction, respectively) 
of queries of the form: 

\begin{equation}
\label{form:quant-query}
\exists \sigma \; \bigwedge_{i} (\mbox{Pr}^\sigma_u(\varphi_{i}) \geq r_{i} )
\wedge \bigwedge_{j} (\mbox{Pr}^\sigma_u(\psi_{j}) > r'_{j} )
\end{equation}

We call queries of the
form (1) {\em extended achievability queries}.
Thus, if the multi-objective query is fixed, it suffices to
perform a fixed number of extended achievability queries 
to decide any  multi-objective query.  
Note, however, that the number of extended achievability queries
we need could be exponential in the size of $B$.
We do not focus on optimizing query complexity in this paper.

A motivation for allowing general multi-objective queries
is to enable {\em assume-guarantee compositional reasoning} for probabilistic
systems.   Consider, e.g., a 
probabilistic system consisting of 
the concurrent composition of two components, $M_1$ and $M_2$,  
where output from $M_1$ provides input to $M_2$ and thus controls  $M_2$. 
We denote this by $M_1 \rhd M_2$. 
$M_2$ itself may generate outputs for some external device,
and $M_1$ may also be controlled 
by external inputs.
(One can also consider symmetric composition, where outputs from both 
 components provide inputs to both.  Here, for simplicity, we restrict 
ourselves to asymmetric composition where $M_1$ controls $M_2$.)
Let $M$ be an MDP with  
separate input and output action alphabets $\Sigma_1$ and $\Sigma_2$,
and let $\varphi_1$ and $\varphi_2$ denote $\omega$-regular 
properties over these two alphabets,
respectively.  We write $\langle \varphi_1 \rangle_{\geq r_1} 
M  \langle \varphi_2 \rangle_{\geq r_2}$, to denote the
assertion that {\em ``if the input controller of  $M$ satisfies $\varphi_1$
with probability $\geq r_1$, then the output generated by $M$ satisfies
$\varphi_2$ with probability $\geq r_2$''}.
Using this, we can formulate a general compositional assume-guarantee 
proof rule:

\begin{center}
$\langle \varphi_1 \rangle_{\geq r_1} M_1 \langle \varphi_2 \rangle_{\geq r_2}$\\
$\langle \varphi_2 \rangle_{\geq r_2} M_2 \langle \varphi_3 \rangle_{\geq r_3}$\\
------------------------------------\\
$\langle \varphi_1 \rangle_{\geq r_1} \; M_1 \rhd M_2 \; \langle \varphi_3 
\rangle_{\geq r_3}$
\end{center}\medskip

\noindent Thus, to check 
$\langle \varphi_1 \rangle_{\geq r_1} M_1 \rhd M_2 \langle \varphi_3
\rangle_{\geq r_3}$
it suffices to check 
two properties of smaller systems: 
$\langle \varphi_1 \rangle_{\geq r_1} M_1 \langle \varphi_2 \rangle_{\geq r_2}$
and
$\langle \varphi_2 \rangle_{\geq r_2} M_2 \langle 
\varphi_3 \rangle_{\geq r_3}$.
Note that checking $\langle \varphi_1 \rangle_{\geq r_1} M  \langle
\varphi_2 \rangle_{\geq r_2}$ amounts to checking that
there does not
exist a strategy $\sigma$ controlling $M$ such that
$\Pr^\sigma_u(\varphi_1) \geq r_1$ and $\Pr^\sigma_{u}(\varphi_2) < r_2$.

We also consider {\em qualitative multi-objective queries}.
These are queries restricted so that $B$ contains  
only {\em qualitative predicates} of the form 
$\Pr^\sigma_u(\varphi_i) \Delta b$, where $b \in \{0,1\}$.
These can, e.g., be used to check qualitative
assume-guarantee conditions of
the form: $\langle \varphi_1 \rangle_{\geq 1} M  \langle
\varphi_2 \rangle_{\geq 1}$.
It is not hard to see that again, via boolean manipulations
and complementation of automata, 
we can convert any qualitative query to a number
of queries of the  form:
\[ \exists \sigma \;  \bigwedge_{\varphi \in \Phi} 
(\mbox{Pr}^\sigma_u(\varphi) = 1) \wedge 
\bigwedge_{\psi \in \Psi}  (\mbox{Pr}^\sigma_u(\psi) > 0)\]

\noindent where $\Phi$ and $\Psi$ are sets of $\omega$-regular properties.
It thus suffices to consider only these qualitative queries.

\begin{figure*}[t]
\begin{center}
\scalebox{0.45}{\input{bad_choice.pstex_t}}
\end{center}
\caption{The MDP $M'$.\label{fig:bad_choice}}
\end{figure*}

In the next sections we study how to decide 
various classes of multi-objective
queries, and how to approximate the Pareto curve for properties $\Phi$.
Let us observe here a difficulty that
we will have to deal with.  
Namely, in general 
we will need both randomization and memory 
in our strategies in order to satisfy
even  simple qualitative multi-objective queries.
Consider the MDP, $M'$, shown in Figure \ref{fig:bad_choice}, and 
consider the conjunctive query:
$B \equiv \Pr^\sigma_u(\Box \Diamond 
P_1) > 0 \wedge \Pr^\sigma_u(\Box \Diamond P_2) > 0$.
It is not hard to see that starting at state $u$ in $M'$ 
any strategy
$\sigma$ that satisfies $B$
must use both memory and randomization.
Each predicate in $B$ can
be satisfied in isolation (in fact with probability $1$),
but, with a memoryless or deterministic strategy, if we
try to satisfy $\Box \Diamond  P_2$ with non-zero probability, 
we will be forced to satisfy $\Box \Diamond P_1$ with probability $0$.
Note, however, that we can satisfy both with probability $> 0$
using a strategy that uses both memory and randomness:
namely, upon reaching the state labeled $P_1$ for the first
time, with probability $1/2$ we use move $a$ and with probability
$1/2$ we use move $b$.  Thereafter, upon encountering the state 
labeled $P_1$ for the $n$th time, $n \geq 2$, we deterministically 
pick action $a$.
This clearly assures that both predicates are satisfied with 
probability $= 1/2 > 0$.

We note that our results (combined with 
the earlier results of \cite{CY98}) imply that for general 
multi-objective queries a randomized strategy with a {\em finite} 
amount of memory (which depends on the MDP and query) 
does suffice to satisfy any satisfiable quantitative 
multi-objective $\omega$-regular query. 

\section{Multi-objective reachability}
\label{sec:reach}

In this section, as a step towards 
quantitative multi-objective model checking problems, 
we study a simpler multi-objective reachability problem.
Specifically, we are given an MDP, $M = (V,\Gamma, \delta)$, a starting
state $u$, and
a collection of target sets $F_i \subseteq V$,
$i=1,\ldots,k$.  The sets $F_i$ may overlap.  
We have $k$ 
objectives: the $i$-th objective is to maximize the probability of 
$\Diamond F_i$, i.e., of
reaching some state in $F_i$.
We assume that the states $F = \bigcup^k_{i=1} F_i$
are all absorbing states with a self-loop.
In other words, for all $v \in F$,   
$(v,a,1,v) \in \delta$ and $\Gamma_v = \{ a \}$. (The 
assumption that target states are absorbing is necessary for the proofs in this 
section, but it is not a restriction in general for our results.  It  will
follow from the model checking results in Section 
\ref{sec:quan-model-checking}, which build on this section, 
that multi-objective
reachability problems for arbitrary target states (whether absorbing or not) 
can also be handled with the 
same complexities.)

We first need to do some preprocessing on
the MDP, to remove some useless states.
For each state $v \in V \setminus F$ we can check easily 
whether there
exists a 
strategy $\sigma$ such that
$Pr^\sigma_v(\Diamond F) > 0$: this just amounts to 
checking whether there exists a path from $v$ to $F$ in the underlying
graph of the MDP, i.e., the graph given by considering
only the non-zero-probability transitions.  Let us call a state that 
does not satisfy
this property a {\em bad} state.
Clearly, for the purposes of optimizing reachability objectives,
we can compute and remove all bad states from 
an MDP.
Thus, it is safe to assume that  
bad states do not exist.\footnote{Technically, we would
need to install a new ``dead'' absorbing state $v_{dead} \not \in F$,
such that all the probabilities going into states
that have been removed
now go to $v_{dead}$.  
For convenience in notation, instead of 
explicitly adding $v_{dead}$ we treat it as implicit: we allow that
for some states $v \in V$ and some action $a \in \Gamma_v$ we have 
$\sum_{v' \in V} p_{(v,\gamma,v')} < 1$, and we implicitly
assume that there is an ``invisible'' transition to $v_{dead}$ with the
residual probability, i.e.,  with
$p_{(v,\gamma,v_{dead})}  = 1- \sum_{v' \in V} p_{(v,\gamma,v')}$.
Of course, $v_{dead}$ would then be
a ``bad'' state, but we can ignore this implicit state.}
Let us call an MDP with goal states $F$
{\em cleaned-up} if it does not contain any bad states.

\begin{proposition}
\label{prop:clean-up}
For a cleaned-up MDP, an initial distribution $\alpha
\in {\mathcal D}(V \setminus F)$, and a vector of probabilities
$r \in [0,1]^k$, 
there exists a (memoryless) strategy $\sigma$ such that 
\[\bigwedge^k_{i=1}\mbox{\em Pr}^\sigma_\alpha(\Diamond F_i) \geq r_i\]
 if and only
if there exists a (respectively, memoryless) strategy $\sigma'$ such that
\[\bigwedge^k_{i=1}  
\mbox{Pr}^{\sigma'}_\alpha(\Diamond F_i) \geq r_i  \; \wedge
\; \bigwedge_{v \in V} \mbox{\em Pr}^{\sigma'}_v(\Diamond F) > 0\,.\] 
\end{proposition}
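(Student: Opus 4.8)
The \emph{if} direction is immediate: a strategy witnessing the right-hand side in particular witnesses the left-hand side, and is memoryless if the given one is. So the plan is to prove the \emph{only if} direction by surgery on a given strategy $\sigma$. The first ingredient I would extract from the cleaned-up hypothesis is a single \emph{pure memoryless} strategy $\tau$ with $\Pr^\tau_v(\Diamond F)>0$ for \emph{every} $v\in V$: for each $v\notin F$ let $d(v)$ be its graph-distance to $F$ (finite, since no state is bad), let $\tau(v)$ deterministically pick an action having a positive-probability successor $v'$ with $d(v')=d(v)-1$, and on $F$ let $\tau$ play the self-loop. Following $\tau$ from any $v$ then yields a positive-probability path reaching $F$ in $d(v)$ steps.

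Given $\sigma$ with $\Pr^\sigma_\alpha(\Diamond F_i)\ge r_i$ for all $i$, let $W$ be the set of states (memoryless case) or histories (general case) from which the continuation of $\sigma$ reaches $F$ with probability $0$. The key observation is that $F\cap W=\emptyset$ and that $W$ is closed under positive-probability $\sigma$-successors, so any run that reaches $F$ (hence any run that reaches some $F_i\subseteq F$) must, almost surely, avoid $W$ until it hits $F$. I then define $\sigma'$ to agree with $\sigma$ outside $W$ and to switch permanently to the memoryless $\tau$ the first time $W$ is entered. In the memoryless case this is literally $\sigma'=\sigma$ on $V\setminus W$ and $\sigma'=\tau$ on $W$, so $\sigma'$ is again memoryless, giving the parenthetical ``respectively, memoryless'' correspondence.

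Two verifications remain. For the objectives: $\sigma$ and $\sigma'$ induce the same measure on runs up to the first entry into $W$, and reaching $F_i$ occurs (a.s.) strictly before any such entry, so $\Pr^{\sigma'}_\alpha(\Diamond F_i)\ge \Pr^\sigma_\alpha(\Diamond F_i)\ge r_i$; reachability can only improve, never degrade. (This is precisely why a naive $\epsilon$-mixing of $\sigma$ with $\tau$ would fail when some inequality $\Pr^\sigma_\alpha(\Diamond F_i)\ge r_i$ is tight.) For positive reachability from an arbitrary start state $v$: if $v\notin W$, a positive-probability $\sigma$-path from $v$ to $F$ stays inside $V\setminus W$, where $\sigma'=\sigma$, so it survives under $\sigma'$; if $v\in W$, then $\sigma'$ plays $\tau$, which exits $W$ with positive probability, landing in some $\rho\notin W$ from which the previous case again supplies a positive-probability path to $F$. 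In all cases $\Pr^{\sigma'}_v(\Diamond F)>0$.

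I expect the main obstacle to be the bookkeeping in the general, history-dependent case. There $W$ is a set of histories, and one must check both that the surgery is well defined (switching to the memoryless $\tau$ after a history-dependent prefix) and that the positive-reachability clause, which quantifies over every \emph{bare} start state $v$, really holds: this requires reading $\Pr^{\sigma'}_v$ for a history-dependent $\sigma'$ started afresh at $v$ and confirming that membership of the one-state history ``$v$'' in $W$ is governed exactly by $\Pr^\sigma_v(\Diamond F)$. The memoryless case, by contrast, reduces to a clean partition of $V$ into $W$ and $V\setminus W$.
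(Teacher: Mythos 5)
Your proposal is correct and follows essentially the same route as the paper's own (much terser) argument: identify the ``dead'' histories/states from which the continuation of $\sigma$ reaches $F$ with probability $0$, and patch $\sigma$ there by switching to a memoryless strategy that reaches $F$ with positive probability, noting that the patch preserves memorylessness and cannot decrease any $\Pr(\Diamond F_i)$ since runs reaching $F_i$ almost surely never enter the dead set. Your write-up merely makes explicit what the paper leaves implicit (the BFS construction of $\tau$, the closure of $W$ under $\sigma$-successors, and the fresh-start reading of $\Pr^{\sigma'}_v$), all of which is sound.
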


\begin{proof}
This is quite obvious, but we give a quick argument anyway. 
Suppose we have such a strategy $\sigma$. Since the
MDP is cleaned-up, we know that from every state in 
$V$
we can reach $F$
with a positive probability.
Suppose the strategy leads to a history whose last state is 
$v \in V \setminus F$, 
and that thereafter the strategy is such that it will never reach $F$
on any path.  We simply revise $\sigma$ 
to a strategy $\sigma'$ such that, if we ever arrive at
such a ``dead'' history, we switch
and play according to the memoryless strategy starting at $v$
which reaches $F$ with some positive probability.
Note that if $\sigma$ is memoryless then so is $\sigma'$.
\end{proof}

\begin{figure*}[t]
\begin{quote}
{\bf Objectives ($i = 1, \ldots, k$):} $\quad \quad${\bf Maximize}  $\sum_{v \in F_i} y_v$;

\vspace*{0.08in}

{\bf Subject to:}

\begin{tabbing}
$\!\sum_{\gamma \in \Gamma_v} y_{(v,\gamma)} - 
\sum_{v' \in V} \sum_{\gamma' \in \Gamma_{v'}}  p_{(v',\gamma',v)} y_{(v',\gamma')}$ \= $=$ \= $\alpha(v) \quad$ \= 
 \!\mbox{for all $v \in V\setminus F$;}\\
$\!y_v - 
\sum_{v' \in V \setminus F} \sum_{\gamma' \in \Gamma_{v'}}  p_{(v',\gamma',v)} y_{(v',\gamma')}$ \> $=$ \> $0 \quad$ \> 
 \!\mbox{for all $v \in F$;}\\
\noindent $\!y_v$ \> $\geq$ \> $0 \quad$ \> \!\mbox{for all $v \in F$;}\\
$\!y_{(v,\gamma)}$ \> $\geq$ \> $0 \quad$ \> 
\!\mbox{for all $v \in V \setminus F$ and $\gamma \in \Gamma_u$ .}
\end{tabbing}
\end{quote}
\caption{Multi-objective LP for the multi-objective MDP reachability problem
\label{fig:multi-LP}}
\end{figure*}

\noindent Now, consider the multi-objective LP described in
Figure \ref{fig:multi-LP}.\footnote{We mention 
without further elaboration that this LP can 
be derived, using complementary slackness, from the 
dual LP of the standard LP for single-objective
reachability obtained from Bellman's optimality equations,
whose variables are $x_v$, for $v \in V$, and whose
unique optimal solution is the vector $x^*$
with $x^*_v = \max_{\sigma} 
\Pr^\sigma_v(\Diamond F)$
(see, e.g., \cite{Puterman94,CY98}).}
The set of variables in this LP are as follows:
for each $v \in F$, there is a variable $y_v$, 
and for each $v \in V \setminus F$ and each 
$\gamma \in \Gamma_v$ there
is a variable $y_{(v,\gamma)}$.

\begin{theorem}
\label{thm:red-to-multi-LP}
Suppose we are given a cleaned-up MDP, $M = (V,\Gamma,\delta)$, with multiple target sets
$F_i \subseteq V$, $i = 1, \ldots, k$,  
where 
every target $v \in F = \bigcup^k_{i=1} F_i$  is an absorbing state.
Let $\alpha \in {\mathcal D}(V \setminus F)$ 
be an initial distribution (in particular 
$V \setminus F \neq \emptyset$).
Let $r \in (0,1]^k$ be a vector of positive probabilities. 
Then the following are all equivalent:
\begin{enumerate}[\em(1.)]
\item There is a (possibly randomized) memoryless strategy $\sigma$ such that 
\[\bigwedge^k_{i=1}\left(\mbox{\em Pr}^\sigma_\alpha(\Diamond F_i) \geq r_i\right) \;\]

\item There is a feasible solution $y'$ for the multi-objective
LP in Fig. \ref{fig:multi-LP} such that
\[\bigwedge^k_{i=1} \left(\sum_{v \in F_i} y'_v \geq r_i\right) \;\]

\item There is an arbitrary strategy $\sigma$ such that 
\[\bigwedge^k_{i=1}  \left(\mbox{\em Pr}^\sigma_\alpha(\Diamond F_i) \geq r_i\right) \;\]

\end{enumerate}
\end{theorem}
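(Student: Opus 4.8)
The plan is to prove (1) $\Rightarrow$ (3) trivially, to establish the equivalence (1) $\Leftrightarrow$ (2) directly by a correspondence between memoryless strategies and feasible points of the LP of Figure \ref{fig:multi-LP}, and then to close the cycle with (3) $\Rightarrow$ (1), which is where the genuine difficulty lies. Throughout I would read the LP as a ``stochastic flow'' / ``expected number of visits'' system: $y_{(v,\gamma)}$ is meant to be the expected number of times action $\gamma$ is used at $v$, $y_v$ (for $v \in F$) the absorption probability at the target $v$, the first group of equations is flow conservation at non-target states with source term $\alpha$, and the second group records the flow absorbed into each target.

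For (1) $\Rightarrow$ (2), I start from a memoryless $\sigma$ achieving all the $r_i$. The obstacle is that the expected number of visits need not be finite, so I would first invoke Proposition \ref{prop:clean-up} to replace $\sigma$ by a memoryless $\sigma'$ that still achieves every $r_i$ but additionally has $\Pr^{\sigma'}_v(\Diamond F) > 0$ for all $v$. Since $\sigma'$ is memoryless, $M^{\sigma'}$ is a genuine \emph{finite} Markov chain in which every state reaches $F$ with positive probability; hence $F$ is reached with probability $1$, every state of $V \setminus F$ is transient, and the expected visit counts $x_v = \sum_{t\ge0}\Pr^{\sigma'}_\alpha(\text{the state at time } t \text{ is } v)$ are all finite. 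Setting $y_{(v,\gamma)} = x_v\,\sigma'(v)(\gamma)$ and $y_v = \Pr^{\sigma'}_\alpha(\Diamond v)$ for $v \in F$ then satisfies the flow equations (these are precisely the balance identities for expected visits summed over time) and yields $\sum_{v\in F_i} y_v = \Pr^{\sigma'}_\alpha(\Diamond F_i) \ge r_i$.

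For (2) $\Rightarrow$ (1), given a feasible $y'$ I would define $\sigma(v)(\gamma) = y'_{(v,\gamma)}/x_v$ where $x_v := \sum_{\gamma'}y'_{(v,\gamma')}$, whenever $x_v > 0$, and arbitrarily otherwise; a short check shows any state with $x_v = 0$ receives neither source mass nor inflow, so it is never reached and the choice there is irrelevant. Writing $Q$ and $R$ for the induced one-step matrices within $V\setminus F$ and into $F$, the flow equations read $x(I-Q) = \alpha$, the very equation satisfied by the true expected-visit vector $\hat x$, which is its minimal nonnegative solution; hence $\hat x \le x < \infty$ and absorption is almost sure. The one subtlety is that $x$ may strictly exceed $\hat x$ on ``circulating'' recurrent classes inside $V \setminus F$. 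But $\xi := x - \hat x \ge 0$ satisfies $\xi = \xi Q$, and comparing total masses forces $\xi$ to be supported only on states that never leak to $F$, i.e. states with zero $R$-row; therefore $xR = \hat x R$ and the absorption probabilities reproduce $y'$ exactly: $\Pr^\sigma_\alpha(\Diamond F_i) = \sum_{v\in F_i}(\hat x R)_v = \sum_{v\in F_i} y'_v \ge r_i$.

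The main obstacle is (3) $\Rightarrow$ (1): a history-dependent strategy can be absorbed with probability $1$ yet visit some state infinitely often in expectation, so the flow construction of (1) $\Rightarrow$ (2) cannot be imitated from an arbitrary $\sigma$. I would instead argue by convexity. By the already-proved (1) $\Leftrightarrow$ (2), the set of vectors achievable by memoryless strategies is the downward closure in $[0,1]^k$ of the image of the LP polyhedron under the linear map $y \mapsto (\sum_{v\in F_1}y_v,\dots,\sum_{v\in F_k}y_v)$; since the objective values lie in $[0,1]$, this set $U_{ml}$ is a closed, convex, downward-closed polytope. If some strategy achieved a vector $t^* \notin U_{ml}$, a separating hyperplane, whose normal $w$ may be taken with $w \ge 0$ because $U_{ml}$ is downward closed, would give $\sum_i w_i t^*_i > \max_{r \in U_{ml}} \sum_i w_i r_i$. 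But $\sum_i w_i \Pr^\sigma_\alpha(\Diamond F_i)$ is a single weighted-reachability objective, equivalently a total-reward objective assigning reward $\sum_{i : v \in F_i} w_i$ to absorption at $v$; by standard single-objective MDP theory (Bellman optimality; \cite{CY98,Puterman94}) its optimum over \emph{all} strategies is attained by a memoryless strategy, so $\sum_i w_i t^*_i \le \max_{\text{memoryless}}\sum_i w_i \Pr^\sigma_\alpha(\Diamond F_i) \le \max_{r\in U_{ml}}\sum_i w_i r_i$, a contradiction. Hence every achievable vector is dominated by a memoryless one, which gives (3) $\Rightarrow$ (1) and closes the equivalence.
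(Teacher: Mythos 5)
Your proposal is correct, and its overall skeleton is the paper's: the same decomposition ((1.)$\Leftrightarrow$(2.) via an expected-visits/flow correspondence after invoking Proposition~\ref{prop:clean-up}, then (3.)$\Rightarrow$(1.) by convexity), and your (3.)$\Rightarrow$(1.) argument --- memoryless-achievable set as a projection of the LP polyhedron, hence convex and downward closed, a nonnegative separating hyperplane, and the weighted objective $\sum_i w_i \Pr^\sigma_\alpha(\Diamond F_i)$ recast as an undiscounted total-reward problem with reward on first entry into $F$, which classical MDP theory solves memorylessly --- is exactly the argument in the paper. Where you genuinely diverge is inside (2.)$\Rightarrow$(1.), which is the one step with a real subtlety: a feasible $y'$ may carry excess flow circulating forever inside $V \setminus F$. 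The paper handles this by isolating the set $W$ of states with a flow path into $F$, inverting $(I - P^\sigma_{W,W})$, and verifying entrywise that $y''$ agrees on $W$ with the visit-count vector $y'$ induced by the extracted strategy, so the mass absorbed into each $v \in F$ matches. You instead use the standard characterization of the expected-visits vector $\hat{x}$ as the \emph{minimal} nonnegative solution of $x(I-Q) = \alpha$, giving $\hat{x} \leq x < \infty$, and then show the excess $\xi = x - \hat{x}$ satisfies $\xi = \xi Q$, so by conservation of mass ($\xi \mathbf{1} = \xi Q^n \mathbf{1}$ for all $n$) it is supported on states that never leak out of $V \setminus F$; hence $\xi R = 0$ and $xR = \hat{x}R$, which yields $\Pr^\sigma_\alpha(\Diamond \{v\}) = y'_v$ for $v \in F$ without any matrix inversion or the $y' = y''$ comparison. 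This is a cleaner and more standard probabilistic treatment of the same obstacle, and it buys a shorter proof; the paper's computation buys the slightly stronger explicit fact that the feasible solution and the induced strategy's visit counts coincide on $W$. One small caveat: since a cleaned-up MDP may have substochastic rows (the implicit dead state $v_{dead}$ of the paper's footnote), your side remarks that ``$F$ is reached with probability $1$'' and ``absorption is almost sure'' are not literally true --- mass can leak to $v_{dead}$ --- but this is harmless, because your argument only uses transience of the states of $V\setminus F$ (finiteness of $\hat{x}$) and the identity $\Pr^\sigma_\alpha(\Diamond \{v\}) = (\hat{x}R)_v$ for absorbing $v \in F$, both of which hold in the substochastic setting.
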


\begin{proof}
\mbox{}

\noindent (1.) $\Rightarrow$ (2.).
Since the MDP is cleaned up, by
Proposition \ref{prop:clean-up} we can 
assume there is a memoryless strategy $\sigma$ such that 
$\bigwedge^k_{i=1}  \Pr^\sigma_\alpha(\Diamond F_i) \geq r_i \;$ and
$\; \forall v \in V \;$ $Pr^\sigma_v(\Diamond F) > 0$.
Consider the square matrix $P^\sigma$ whose size is 
$|V \setminus F| \times |V \setminus F|$,
and whose rows and columns are indexed by states in $V \setminus F$.
The $(v,v')$'th entry  of $P^\sigma$,  $P^\sigma_{v,v'}$, is the
probability that starting in state $v$ we shall in one step end up
in state $v'$.  In other words, 
$P^\sigma_{v,v'} = \sum_{\gamma \in  \Gamma_v} \sigma(v)(\gamma) \cdot p_{v,\gamma,v'}$.

For all $v \in V \setminus F$, let 
$y'_{(v,\gamma)} =  \sum_{v' \in V \setminus F}  \alpha(v')
\sum^\infty_{n=0} (P^\sigma)^n_{v',v} \sigma(v)(\gamma) $.
In other words $y'_{(v,\gamma)}$ denotes the ``expected number of times that,
using the strategy $\sigma$, starting in the distribution $\alpha$, we
will visit the state $v$ and upon doing so choose action $\gamma$''.
We don't know yet that these are finite values, but assuming they are,
for $v \in F$, let $y'_v = 
\sum_{v' \in V \setminus F} 
\sum_{\gamma' \in \Gamma_{v'}}  p_{(v',\gamma',v)} y'_{(v',\gamma')}$.
This completes the definition of the entire vector $y'$.

\begin{lemma}
\label{lem:y-prime-good}
The vector $y'$ is well defined (i.e., all entries $y'_{(v,\gamma)}$ 
are finite).\\
Moreover, $y'$ is a feasible solution to the constraints of the
LP in Figure \ref{fig:multi-LP}.
\end{lemma}
\begin{proof}
First, we show that for all $v \in V \setminus F$
and $\gamma \in \Gamma_v$,  
$y'_{(v,\gamma)}$ 
is a well defined finite value.
It then also follows from the definition of $y'_v$
that $y'_v$ is also finite and thus that the vector $y'$ is
well defined. 
Note that 
because $\sigma$ has
the property that $\forall v \in V \;$ $Pr^\sigma_v(\Diamond F) > 0$,
$P^\sigma$ is clearly a substochastic matrix
with the property that, for some power $d \geq 1$, 
all of the row sums of $(P^\sigma)^d$ 
are strictly less than 1.
Thus, it follows that
$\lim_{n \rightarrow \infty} (P^\sigma)^n \rightarrow {\mathbf 0}$, and
thus by standard facts about matrices
the inverse matrix $(I- P^\sigma)^{-1} = \sum^\infty_{n=0} (P^\sigma)^n$
exists and is non-negative.  Now observe that
\begin{eqnarray*}
y'_{(v,\gamma)} & = & \sum_{v' \in V \setminus F}  \alpha(v')
\sum^\infty_{n=0} (P^\sigma)^n_{v',v} \sigma(v)(\gamma) \\
& = & \sum_{v' \in V \setminus F}  \alpha(v') \sigma(v)(\gamma)
\sum^\infty_{n=0} (P^\sigma)^n_{v',v} \\
& = & \sigma(v)(\gamma) \sum_{v' \in V \setminus F}  \alpha(v')  (I - P^\sigma)^{-1}_{v',v}
\end{eqnarray*}

Next, we show that $y'$ is a feasible solution to the constraints 
in the multi-objective LP
in Figure \ref{fig:multi-LP}.
Note that, for each state $v \in V \setminus F$,
the expression
$\sum_{v' \in V} 
\sum_{\gamma' \in \Gamma_{v'}}  p_{(v',\gamma',v)} y'_{(v',\gamma')}$
is precisely the ``expected number of times we will take a transition into the state $v$''
if we start at initial distribution $\alpha$ and using strategy $\sigma$,
whereas $\sum_{\gamma \in \Gamma_v} y'_{(v,\gamma)}$ defines precisely
the ``expected number of times we will take a transition out of the state $v$''.
Thus $\alpha(v)$, the probability that we will 
start in state $v$, is precisely given by $ 
\sum_{\gamma \in \Gamma_v} y'_{(v,\gamma)} - 
\sum_{v' \in V} 
\sum_{\gamma' \in \Gamma_{v'}}  p_{(v',\gamma',v)} y'_{(v',\gamma')} = \alpha(v)$.
More formally, for each state $v \in V \setminus F$:
\begin{eqnarray*}
\sum_{v' \in V} \sum_{\gamma' \in \Gamma_{v'}}  p_{(v',\gamma',v)} y'_{(v',\gamma')}
& = & \sum_{v' \in V} 
\sum_{\gamma' \in \Gamma_{v'}}  p_{(v',\gamma',v)} 
\sum_{v'' \in V \setminus F}  \alpha(v'')
\sum^\infty_{n=0} (P^\sigma)^n_{v'',v'} \sigma(v')(\gamma')\\
& = & 
\sum_{v'' \in V \setminus F}  \alpha(v'')
\sum_{v' \in V} 
\sum_{\gamma' \in \Gamma_{v'}}  p_{(v',\gamma',v)} 
\sum^\infty_{n=0} (P^\sigma)^n_{v'',v'} \sigma(v')(\gamma')\\
& = & 
\sum_{v'' \in V \setminus F} \alpha(v'') \sum^\infty_{n=1} (P^\sigma)^n_{v'',v}
\end{eqnarray*}
The last expression is easily seen to be the expected number of times we will 
transition into state $v$.
It is clear by linearity of expectations that 
$\sum_{\gamma \in \Gamma_v} y'_{(v,\gamma)}$ gives the expected number of times
we will transition out of state $v$.  It is thus clear that
$\sum_{\gamma \in \Gamma_v} y'_{(v,\gamma)} - 
\sum_{v' \in V} 
\sum_{\gamma' \in \Gamma_{v'}}  p_{(v',\gamma',v)} y'_{(v',\gamma')} = \alpha(v)$.   
\end{proof}

Now we argue that $\sum_{v \in F_i}  y'_v  = \Pr^\sigma_\alpha(\Diamond F_i)$. 
To see this, note that for $v \in F$, $y'_v = \sum_{v' \in V \setminus F} 
\sum_{\gamma' \in \Gamma_{v'}}  p_{(v',\gamma',v)} y'_{(v',\gamma')}$ 
is precisely the ``{\em expected number of times that we will transition into 
state $v$
for the first time}'', starting at distribution $\alpha$.  
The reason we can say ``for the first time'' is because
only the states in $V \setminus F$ are included in the matrix $P^\sigma$. 
But note that this italicised statement in quotes 
is another way to define the probability of eventually reaching state $v$.
This equality can be establish formally, but we omit the formal algebraic 
derivation here.
Thus $\sum_{v \in F_i} y'_v = \Pr^\sigma_\alpha(\Diamond F_i) \geq r_i$.
We are done with (1.) $\Rightarrow$ (2.).\\

\noindent (2.) $\Rightarrow$ (1.).   We now wish to show that if $y''$ is
a feasible solution to the multi-objective LP such that 
$\sum_{v \in F_i} y''_v  \geq r_i > 0$, for all $i=1,\ldots,k$, 
then there exists a memoryless
strategy $\sigma$ such that $\bigwedge^k_{i=1}\Pr^\sigma_\alpha(\Diamond F_i) \geq r_i$.

Suppose we have such a solution $y''$.  
Let $S = \{v \in V \setminus F \mid 
\sum_{\gamma \in \Gamma_v} y''_{(v,\gamma)} > 0 \}$.
Let $\sigma$ be the memoryless strategy, given as follows.
For each $v \in S$ 
$$ \sigma(v)(\gamma) := \frac{y''_{(v,\gamma)}} { \sum_{\gamma' \in \Gamma_v} y''_{v,\gamma'}}$$
Note that since $\sum_{\gamma \in \Gamma_v} y''_{(v,\gamma)} > 0$, 
$\sigma(v)$ is a well-defined probability distribution on the moves at 
state $v \in S$.
For the remaining states $v \in (V \setminus F) \setminus S$,
let $\sigma(v)$ be an arbitrary distribution in ${\mathcal D}(\Gamma_v)$.

\begin{lemma}
\label{lem:sigma-good}
This memoryless strategy $\sigma$ satisfies 
$\bigwedge^k_{i=1} Pr^\sigma_\alpha(\Diamond F_i) \geq r_i$.
\end{lemma}
\begin{proof} 
Let us assume, for the sake of convenience in our
analysis, that there is an extra  
dead-end absorbing state $v_{dead}
\not\in F$ available,  and an extra move $\gamma_{dead}$ 
available at each state, $v$, with $p_{(v,\gamma_{dead},v_{dead})} = 1$,
and for each $v \in (V \setminus F) \setminus S$,
instead of letting $\sigma(v)$ be arbitrary,  
let $\sigma(v)(\gamma_{dead}) = 1$.
In other words, from each such state we simply move directly to an
absorbing dead-end which is outside of $F$.  
The assumption that such a dead-end exists is
just for convenience: clearly, without such a dead-end, 
we can use any (mixed) move at such vertices in our strategy,
and such a strategy 
would yield at least as high a
value for $\Pr^\sigma_{\alpha}(\Diamond F_i)$, for all $i=1,\ldots,k$.

Let us now explain the reason why we don't care about what moves are
used at states outside $S$ in the strategy $\sigma$.
Let $\support(\alpha) = \{ v \in V \setminus F \mid \alpha(v) > 0\}$.
We claim $S$ contains all states reachable from $\support(\alpha)$
using strategy $\sigma$.
To see this,   
first note that $\support(\alpha) \subseteq S$, because
for all $v \in \support(\alpha)$, 
since 
$\sum_{\gamma \in \Gamma_v} y''_{(v,\gamma)} - 
\sum_{v' \in V} \sum_{\gamma' \in \Gamma_{v'}}  p_{(v',\gamma',v)} 
y''_{(v',\gamma')} = 
\alpha(v)$ and $\alpha(v) > 0$, and since $y''_{v',\gamma'} \geq 0$ for all 
$v' \in V \setminus F$ and 
$\gamma' \in \Gamma_{v'}$,  
it must be the case that, 
$\sum_{\gamma \in \Gamma_v} y''_{(v,\gamma)} > 0$.  
Thus $\support(\alpha) \subseteq S$.
Inductively, for $k \geq 0$, consider any state $v \in V \setminus F$, 
such that we can, with non-zero probability, 
reach $v$ in $k$ steps using strategy $\sigma$ from a state in $\support(\alpha)$,
and such that we can not reach $v$ 
(with non-zero probability) in any fewer than $k$ step.
For the base case $k=0$, we already know $v \in \support(\alpha) \subseteq S$.
For $k> 0$, we must have $\alpha(v) = 0$. But note that 
there must be a positive probability of 
moving to $v$ in one step from some other
state $v'$ which can be reached in $k-1$ steps from $\support(\alpha)$.
But this is so if and only if for some $\gamma' \in \Gamma_{v'}$,
both $p_{(v',\gamma',v)} > 0$ and $y''_{(v',\gamma')} > 0$ (
and thus $\sigma(v')(\gamma') > 0$).
Hence, 
$\sum_{v' \in V} \sum_{\gamma' \in \Gamma_{v'}}  p_{(v',\gamma',v)} 
y''_{(v',\gamma')} > 0$.
Thus since 
$\sum_{\gamma \in \Gamma_v} y''_{(v,\gamma)} - 
\sum_{v' \in V} \sum_{\gamma' \in \Gamma_{v'}}  p_{(v',\gamma',v)} 
y''_{(v',\gamma')} = 
0$, we must have $\sum_{\gamma \in \Gamma_v} y''_{(v,\gamma)} > 0$,
and thus $v \in S$.
Hence $S$ contains the set of nodes reachable from nodes in the 
support of the initial
distribution, 
$\support(\alpha)$, using the strategy $\sigma$.

We will now show that  $\Pr^\sigma_\alpha(\Diamond F_i) \geq r_i$, for 
all $i = 1,\ldots,k$.
Let us consider the underlying graph of the ``flows'' defined by $y''$.
Namely, let $G =(V,E)$ be a graph on states of $M$ such that $(v,v') \in E$
if and only if there is some $\gamma \in \Gamma_v$ such that
$y''_{(v,\gamma)} > 0$  and $p_{(v,\gamma,v')} > 0$.
Let $W \subseteq V \setminus F$ be the set of vertices in $V \setminus F$
that have a non-zero ``flow'' to $F$, i.e.,
$v$ is in $W$  iff there is a path in $G$ from $v$ to some vertex in $F$.

For $v \in V \setminus F$, let $z_v = \sum_{\gamma \in \Gamma_v} y''_{(v,\gamma)}$.
Note that by the constraints of the LP, for any vertex $v \in S$
\begin{eqnarray*}
\alpha(v)  & = & z_v  - \sum_{v' \in V \setminus F} \sum_{\gamma \in \Gamma_{v'}} 
p_{(v',\gamma,v)} y''_{(v',\gamma)}\\
& = & z_v  - \sum_{v' \in S} \sum_{\gamma \in \Gamma_{v'}} 
p_{(v',\gamma,v)} y''_{(v',\gamma)}  \quad \mbox{(because all flow into $v$ 
comes from $S$)}\\
& = &  z_v - \sum_{v' \in S} \sum_{\gamma \in \Gamma_{v'}} 
p_{(v',\gamma,v)}  y''_{(v',\gamma)} \frac{z_{v'}}{\sum_{\gamma' \in \Gamma_{v'}}
y''_{(v',\gamma')}}\\
& = & z_v - \sum_{v' \in S} \sum_{\gamma \in \Gamma_{v'}} 
p_{(v',\gamma,v)} \sigma(v')(\gamma) z_{v'}\\
& = & z_v - \sum_{v' \in S}  P^\sigma_{v',v} z_{v'}
\end{eqnarray*}
Now, let us focus on the vertices in $W$.   
Note that, by definition, $W \subseteq S$. 
Consider the submatrix $P^\sigma_{W,W}$
obtained from $P^\sigma$ by eliminating the rows and columns whose indices are
not in $W$.
Note that since there is no flow into a vertex in $W$ from 
a vertex outside of $W$, the above equalities yield,
for each $v \in W$, 
$\alpha(v) =  z_v - \sum_{v \in W}  P^\sigma_{v',v} z_{v'}$.
This can be written in matrix notation as  $\alpha^T|_W = z^T|_W (I - P^\sigma_{W,W})$.

Now, note that since every vertex in $W$ has a ``flow'' to $F$, in terms of the
underlying Markov chain of the substochastic matrix $P^\sigma_{W,W}$, this means
that every vertex in $W$ is transient, and that there is a power $d \geq 1$, such that
$(P^\sigma_{W,W})^d$ has the property that all its row sums are strictly less than $1$.
Consequently,  $\lim_{d \rightarrow \infty} (P^\sigma_{W,W})^d = {\mathbf 0}$ and
the matrix $(I-P^\sigma_{W,W})$ is invertible, with $(I - P^\sigma_{W,W})^{-1}
= \sum^\infty_{i=0} (P^\sigma_{W,W})^i)$, a nonnegative matrix.
Thus, 
$z^T|_W 
=  \alpha^T (I - P^\sigma_{W,W})^{-1} = \alpha^T (\sum^\infty_{i=0} (P^\sigma_{W,W})^i$.
From this it follows, again because no vertex outside of $W$ has a flow into $W$, 
that for each $v \in W$:
\begin{eqnarray*}
z_v & = & \sum_{v' \in V \setminus F}  \alpha(v')
\sum^\infty_{n=0} \sum_{\gamma \in \Gamma_{v}} (P^\sigma)^n_{v',v} \sigma(v)(\gamma)\\
& = &
\sum_{\gamma \in \Gamma_{v}}  \sum_{v' \in W}   \alpha(v')
\sum^\infty_{n=0} (P^\sigma)^n_{v',v} \sigma(v)(\gamma)\\ 
& & \quad \quad
\mbox{(because
all moves into $v$ of strategy $\sigma$ come from vertices in $W$)}\\
& = &  \sum_{\gamma \in \Gamma_v} y'_{(v,\gamma)}
\end{eqnarray*}
where, in the last expression, the values $y'_{(v,\gamma)}$, not to be mistaken with 
$y''_{(v,\gamma)}$, are values from the vector $y'$ which we obtained 
in the proof that $(1.) \Rightarrow (2.)$,
from 
a given memoryless strategy
$\sigma$.  In this case, the strategy $\sigma$ in question is precisely the memoryless
strategy we just defined based on $y''$.
Thus, for all $v \in W$:
\begin{equation}
z_v = \sum_{\gamma \in \Gamma_v} y''_{(v,\gamma)} = \sum_{\gamma \in \Gamma_v} y'_{(v,\gamma)}
\label{eqn:tot_same}
\end{equation}
We next show that in fact for all $v \in W$ and $\gamma \in \Gamma_v$, 
$y''_{(v,\gamma)} = y'_{(v,\gamma)}$.  For $v \in W$ and $\gamma \in \Gamma_v$,
we have:
\begin{eqnarray*}
y'_{(v,\gamma)} & = & \sum_{v' \in W}  \alpha(v') \sum^\infty_{i=0} (P^\sigma_{W,W})^i_{v',v} 
\sigma(v)(\gamma)\\
& = & \sum_{v' \in W} \alpha(v')  \sum^\infty_{i=0} (P^\sigma_{W,W})^i_{v',v} 
\frac{y''_{(v,\gamma)}}{\sum_{\gamma' \in \Gamma_v} y''_{(v,\gamma')}}\\
& = & \frac{y''_{(v,\gamma)}}{\sum_{\gamma' \in \Gamma_v} y''_{(v,\gamma')}} 
\sum_{v' \in W} \alpha(v')  \sum^\infty_{i=0} (P^\sigma_{W,W})^i_{v',v}
\end{eqnarray*}
But recall that the ``expected number of times we will transition out of state $v$''
is given by
$\sum_{\gamma \in \Gamma_v} y'_{(v,\gamma)} = \sum_{v' \in W} \alpha(v')  \sum^\infty_{i=0} (P^\sigma_{W,W})^i_{v',v}$.

Hence $y'_{(v,\gamma)} = \frac{y''_{(v,\gamma)}}{\sum_{\gamma' \in \Gamma_v} y''_{(v,\gamma')}}
\sum_{\gamma \in \Gamma_v} y'_{(v,\gamma)}$.
Thus, by using equation (\ref{eqn:tot_same}) and canceling, we get 
$y'_{(v,\gamma)} = y''_{(v,\gamma)}$.
Thus, since $y''$ is a feasible solution to the LP, we have that for any $v \in F$:
\begin{eqnarray*}
y''_v & = &  \sum_{v' \in V \setminus F} \sum_{\gamma' \in \Gamma_{v'}} p_{(v',\gamma',v)}
y''_{(v', \gamma')}\\
& = & \sum_{v' \in W} \sum_{\gamma' \in \Gamma_{v'}} p_{(v',\gamma',v)}
y''_{(v',\gamma')} \quad \mbox{(because all flow into $F$ is from $W$)}\\
& = & \sum_{v' \in W} \sum_{\gamma' \in \Gamma_{v'}} p_{(v',\gamma',v)}
y'_{(v',\gamma')}\\
& = & \mbox{Pr}^\sigma_\alpha(\Diamond \{ v \})
\end{eqnarray*}
The last equality holds because, as we showed in the proof of ((1.) $\Rightarrow$ (2.)),
the expression $\sum_{v' \in W} \sum_{\gamma' \in \Gamma_{v'}} p_{(v',\gamma',v)}
y'_{(v',\gamma')} = \sum_{v' \in V \setminus F} \sum_{\gamma' \in \Gamma_{v'}} p_{(v',\gamma',v)}
y'_{(v',\gamma')}$ is 
exactly the ``{\em expected number of times that we will visit the vertex $v \in F$ for
the first time}'', which is precisely the probability $\Pr^\sigma_\alpha(\Diamond \{ v \})$.

Thus, clearly, $\sum_{v \in F_i} y''_{v} = \sum_{v \in F_i} \Pr^\sigma_\alpha(\Diamond \{ v \})
= \Pr^\sigma_\alpha(\Diamond F_i)$.
Thus, since we have assumed that $\sum_{v \in F_i} y''_{v} \geq r_i$,
we have established that $\Pr^\sigma_\alpha(\Diamond F_i) \geq r_i$, for all target sets $F_i$.
\end{proof}

This completes the proof that (2.) $\Rightarrow$ (1.).

\vspace*{0.09in}

\noindent  
(3.) $\Leftrightarrow$ (1.).  Clearly (1.) $\Rightarrow$ (3.),
so we need to show that (3.) $\Rightarrow$ (1.).

Let $U$ be the set of achievable vectors, i.e.,
all $k$-vectors $r = \langle r_1 \ldots r_k \rangle$ 
such that there is a (unrestricted) strategy $\sigma$ 
such that $\bigwedge^k_{i=1} \Pr^\sigma_\alpha(\Diamond F_i) \geq r_i$. 
Let $U^{\odot}$ be the analogous set where the strategy $\sigma$ is restricted
to be a possibly randomized but memoryless (stationary) strategy.
Clearly, $U$ and $U^{\odot}$ are both downward closed, i.e.,
if $r \geq r'$ and $r \in U$ then also 
$r' \in U$, and similarly with $U^{\odot}$. 
Also, obviously $U^{\odot} \subseteq U$.
We characterized $U^{\odot}$ in (1.) $\Leftrightarrow$ (2.), 
in terms of a multi-objective LP.
Thus, $U^{\odot}$ is the projection of the feasible space of a set
of linear inequalities (a polyhedral set), namely the
set of inequalities in the variables $y$ given in Fig. \ref{fig:multi-LP} 
and the inequalities $\sum_{v \in F_i} y_v \geq r_i$, $i=1,\ldots,k$.
The feasible space is a polyhedron in the space indexed by the
$y$ variables and the $r_i$'s, and $U^{\odot}$ is its projection on
the subspace indexed by the $r_i$'s.
Since the projection of a convex set is convex, it follows that
$U^{\odot}$ is convex.

Suppose that there is a point $r \in U \setminus U^{\odot}$.
Since $U^{\odot}$ is convex, this implies that there is a separating hyperplane
(see, e.g., \cite{GLS93})
that separates $r$ from $U^{\odot}$, and in fact since $U^{\odot}$ is downward closed,
there is a separating hyperplane with non-negative coefficients,
i.e. there is a non-negative ``weight'' vector 
$w= \langle w_1, \ldots , w_k \rangle$ such that
$w^T r = \sum^k_{i=1} w_i r_i > w^Tx$ for every point $x \in U^{\odot}$.

Consider now the MDP $M$ with the following undiscounted reward structure.
There is $0$ reward for every state, action and transition,
except for transitions to a state $v \in F$ from a state
in $V \setminus F$; i.e. 
a reward is produced only once, in the first transition
into a state of $F$.
The reward for every transition to a state $v \in F$
is $\sum \; \{w_i \mid i \in \{1,\ldots,k\} \; \mbox{\small \&} \;  v \in F_i\}$.
By the definition, the expected reward of a policy $\sigma$
is $\sum_{i=1}^k w_i \Pr^{\sigma}_\alpha (\Diamond F_i)$. 
From classical MDP theory, we know that there is a memoryless strategy
(in fact even a deterministic one) that maximizes the expected
reward for this type of reward structure. 
(Namely, this is a positive bounded reward case: 
 see, e.g., Theorem 7.2.11 in \cite{Puterman94}.)
Therefore, $\max \{ w^T x \mid x \in U \} = \max \{ w^Tx \mid x \in U^{\odot} \}$,
contradicting our assumption that $w^Tr >  \max \{ w^Tx \mid x \in U^{\odot} \}$.
\end{proof}

\begin{corollary}
\label{cor:ach-and-pareto}
Given an MDP $M = (V,\Gamma,\delta)$,  a number of target sets
$F_i \subseteq V$, $i = 1, \ldots, k + k'$, such that every
state $v \in F = \bigcup^{k+k'}_{i=1} F_i$  is absorbing,
and 
an initial state 
$u$ (or even initial distribution $\alpha \in {\mathcal D}(V)$):
\begin{enumerate}[\em(a.)]

\item Given an extended achievability query for reachability,
$\exists \sigma B$, where \[B \equiv 
\bigwedge^k_{i=1} (\mbox{Pr}^\sigma_u(\diamond F_{i}) \geq r_{i} )
\wedge \bigwedge^{k+k'}_{j={k+1}} (\mbox{Pr}^\sigma_u(\Diamond F_{j}) 
> r_{j} ),\]

\noindent we can in time polynomial in the size of the input, 
$|M|+ |B|$,  
decide whether $\exists \sigma \; B$ is satisfiable and  
if so construct a memoryless strategy 
that satisfies it.

\item For $\epsilon > 0$, we 
can compute an $\epsilon$-approximate Pareto curve ${\mathcal P}(\epsilon)$ 
for the multi-objective reachability problem with objectives
$\Diamond F_i$, $i=1,\ldots,k$,
in time polynomial in $| M |$ and $1/\epsilon$.
\end{enumerate}
\end{corollary}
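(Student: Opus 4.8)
The plan is to reduce both parts to the linear-programming characterization of the achievable set established in Theorem \ref{thm:red-to-multi-LP} and then invoke polynomial-time LP machinery. First I would carry out the preprocessing common to both parts. I clean up the MDP by a single graph search, deleting every bad state (one from which $F=\bigcup_i F_i$ is unreachable); by the discussion preceding the theorem this changes none of the relevant reachability probabilities. To drop the hypothesis $\alpha\in{\mathcal D}(V\setminus F)$ and to accommodate an initial state or distribution lying partly in $F$, I introduce a fresh start state $u^*\notin F$, labelled so that it belongs to no $F_i$, with a single action whose transition distribution is exactly $\alpha$. Since $u^*\notin F$, the extra initial step affects no event $\Diamond F_i$, so $\Pr^\sigma_{u^*}(\Diamond F_i)=\Pr^\sigma_\alpha(\Diamond F_i)$, and now the initial distribution is the Dirac mass on a non-target state, as the theorem requires.

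For part (a), Theorem \ref{thm:red-to-multi-LP} tells us that the achievable set $U_{M_u,\Phi}$ is the downward closure of the image of the feasible region of the LP in Figure \ref{fig:multi-LP} under the objective maps $y\mapsto\sum_{v\in F_i}y_v$. Hence a value vector with some coordinates constrained by $\geq$ and some by $>$ is achievable if and only if the LP has a feasible point $y$ with $\sum_{v\in F_i}y_v\geq r_i$ for the non-strict indices $i\in I$ and $\sum_{v\in F_j}y_v>r_j$ for the strict indices $j\in J$. A non-strict threshold $r_i=0$ is vacuous and is dropped. To decide feasibility in the presence of the strict constraints I introduce one slack variable $\delta$ and solve
\[ \maxim\;\delta \quad\text{subject to the constraints of Fig.~\ref{fig:multi-LP}},\; \textstyle\sum_{v\in F_i}y_v\geq r_i\ (i\in I),\; \textstyle\sum_{v\in F_j}y_v\geq r_j+\delta\ (j\in J). \]
Each $\sum_{v\in F_j}y_v$ is a probability, so $\delta\leq 1$ and the optimum is finite whenever the program is feasible. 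The query $\exists\sigma\,B$ is satisfiable exactly when this LP is feasible with optimum $\delta^*>0$ (or $J=\emptyset$ and the LP without $\delta$ is feasible): a feasible $y$ with $\delta^*>0$ strictly satisfies all strict constraints simultaneously, and conversely any witness $y$ yields the positive value $\delta=\min_{j\in J}(\sum_{v\in F_j}y_v-r_j)$. All of this is solvable in time polynomial in $|M|+|B|$ by standard LP algorithms (\cite{GLS93}), and from the optimal $y$ the memoryless witnessing strategy is read off exactly as in the $(2.)\Rightarrow(1.)$ direction of Theorem \ref{thm:red-to-multi-LP}.

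For part (b), I regard the constraints of Figure \ref{fig:multi-LP} together with the $k$ linear objectives $y\mapsto\sum_{v\in F_i}y_v$, $i=1,\ldots,k$, as a single multi-objective linear program. By Theorem \ref{thm:red-to-multi-LP} the downward closure of its objective image is $U_{M_u,\Phi}$, so the Pareto curve of this LP coincides with ${\mathcal P}_{M_u,\Phi}$, and every objective vector of a feasible point is an achievable vector realized by a memoryless strategy. All objective values lie in $[0,1]$, so the multiplicative notion of $\epsilon$-approximation is meaningful. I then invoke the result of Papadimitriou and Yannakakis (\cite{PapaYan00}) that a polynomial-sized $\epsilon$-approximate Pareto curve of a multi-objective LP is computable in time polynomial in the size of the LP and in $1/\epsilon$; since the LP has size polynomial in $|M|$, this yields ${\mathcal P}_{M_u,\Phi}(\epsilon)$ within the claimed bound, with each returned vector achievable as required.

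The step I expect to be the main obstacle is the correct treatment of the strict inequalities: the characterization of Theorem \ref{thm:red-to-multi-LP} delivers only closed (non-strict) feasibility, so strictness cannot be read off directly and must be forced by maximizing the slack $\delta$ and testing its sign. The delicate point is verifying that $\delta^*>0$ is both necessary and sufficient for \emph{simultaneous} strict satisfiability of all constraints in $J$, rather than merely for each one separately. By comparison, the preprocessing and the appeal to \cite{PapaYan00} are routine.
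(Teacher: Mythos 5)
Your proposal is correct and follows essentially the same route as the paper: part (a) is decided by augmenting the LP of Figure \ref{fig:multi-LP} with the threshold constraints and a single slack variable (the paper's $z$, your $\delta$), maximizing it, testing $z^*>0$, and reading off the memoryless witness via the $(2.)\Rightarrow(1.)$ construction of Theorem \ref{thm:red-to-multi-LP}; part (b) is the same direct appeal to \cite{PapaYan00} applied to that multi-objective LP. Your explicit preprocessing (cleanup, a fresh start state for $\alpha\in{\mathcal D}(V)$ with support in $F$, dropping vacuous $r_i=0$ constraints) only spells out details the paper's proof leaves implicit.
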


\begin{proof}
For (a.),  consider the constraints of the LP in Figure \ref{fig:multi-LP},
and add the following constraints:
for each $i \in \{1,\ldots,k\}$ add the constraint
$\sum_{v \in F_i} y_{v} \geq r_i$, and for each $j \in \{k+1,\ldots,k+k'\}$,
add the constraint $\sum_{v \in F_j} y_{v} \geq r_j + z$,  where
$z$ is a new variable, and also add the constraint $z \geq 0$.
Finally, consider the new objective ``Maximize  $z$''.
Solve this LP to find whether an optimal feasible solution $y^*,z^*$ exists,
and if so whether $z^*>0$.
If no solution exists, or if $z^* \leq 0$, then the extended achievability
query is not satisfiable.
Otherwise, if $z^* > 0$, then a strategy that satisfies $\exists \sigma B$ 
exists,
and moreover we can construct a memoryless strategy that
satisfies it by using the vector $y'' = y^*$ 
and
picking the strategy $\sigma$ constructed from $y''$ in the proof of (2.) 
$\Rightarrow$ (1.) in Theorem \ref{thm:red-to-multi-LP}.

Part (b.) is immediate from Theorem \ref{thm:red-to-multi-LP},
and the results of \cite{PapaYan00}, which show
we can $\epsilon$-approximate the Pareto curve for multi-objective
linear programs in time polynomial in
the size of the constraints and objectives and in $1/\epsilon$.
\end{proof}

\section{Qualitative multi-objective model checking}
\label{sec:qual}

\begin{theorem}
\label{thm:qual-multi}
Given an MDP $M$, an initial state $u$, and
a qualitative multi-objective query $B$,
we can decide whether there exists a strategy $\sigma$ 
that satisfies $B$, and if so construct such a strategy,
in time polynomial 
in $|M|$, and using only graph-theoretic methods
(in particular, without linear programming).
\end{theorem}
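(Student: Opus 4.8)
The plan is to reduce a qualitative multi-objective query to a combination of graph-theoretic conditions about end components of the MDP, building on the classical theory that for a single $\omega$-regular property, the almost-sure and positive-probability reachability/recurrence conditions are governed by the end components reachable from the start state. After the standard boolean normalization mentioned in the excerpt, it suffices to handle a query of the form $\exists \sigma\; \bigwedge_{\varphi \in \Phi}(\Pr^\sigma_u(\varphi)=1) \wedge \bigwedge_{\psi\in\Psi}(\Pr^\sigma_u(\psi)>0)$. First I would translate each $\omega$-regular property into a deterministic (say, Rabin or parity) automaton, form the product MDP $M \times \prod A_{\varphi} \times \prod A_\psi$, and push the satisfaction of each $\varphi_i$ and $\psi_j$ down to an acceptance condition on the product's runs. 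In the product, satisfying $\varphi$ with probability $1$ (resp. $\psi$ with positive probability) becomes a statement about which \emph{accepting end components} the play can be confined to (resp. can reach with positive probability) under $\sigma$.

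\emph{The key steps, in order.} First, reduce via determinization and product construction so that every objective is an acceptance condition read off the product state space; note this is polynomial in $|M|$ for fixed query, which is the regime we care about. Second, recall the structure theorem for MDPs: under any strategy, with probability $1$ the set of states visited infinitely often forms an end component, and conversely for any end component $C$ there is a strategy reaching $C$ and then staying in $C$ while visiting all its states infinitely often. Classify the end components of the product by which of the automata-acceptance conditions they satisfy; call an end component \emph{good} for $\varphi$ if staying inside it forces $L(\varphi)$, and similarly track, for each $\psi_j$, the end components good for $\psi_j$. Third — and this is the crux — characterize when a \emph{single} strategy can simultaneously (i) guarantee that almost all trajectories end up in some end component good for \emph{every} $\varphi_i$ in $\Phi$, while (ii) for each $\psi_j$ separately, send a positive-probability sliver of trajectories into an end component good for $\psi_j$. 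The $\varphi$-conjuncts constrain the \emph{almost-sure} limiting behaviour and so restrict the set of end components we are allowed to absorb into (essentially we may only settle in end components that are good for all of $\Phi$); the $\psi$-conjuncts are weaker, each merely requiring that \emph{some} positive-probability branch reaches a suitable good component. Fourth, turn this into a reachability-in-a-subgraph computation: compute the maximal "allowed" subgraph consisting of states from which one can almost-surely reach the union $G$ of end components good for all of $\Phi$, then check, for each $\psi_j$, that within this allowed region a $\psi_j$-good end component is reachable from $u$; finally assemble the witnessing strategy by mixing, using randomization at the branch points (exactly as the $M'$ example in the excerpt illustrates) to distribute positive probability to each $\psi_j$-target while retaining probability-$1$ eventual absorption into a $\Phi$-good component.

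\emph{The main obstacle} I anticipate is step three: showing that the positive-probability requirements for the different $\psi_j$ do not conflict with each other or with the almost-sure requirements for $\Phi$, and that whenever the graph-theoretic conditions hold a \emph{single} strategy realizes all of them at once. The subtlety is precisely the memory-and-randomization phenomenon flagged in the excerpt — one cannot in general meet several positive-probability recurrence goals with a memoryless or deterministic strategy, so the witness must branch randomly early (allocating a positive probability budget $p_j>0$ to each $\psi_j$) and then follow a memory-dependent policy on each branch that drives that branch almost-surely into a component good both for that $\psi_j$ and for all of $\Phi$. The correctness argument here must verify that the $\Phi$-conjuncts, being almost-sure, are inherited by \emph{every} branch, so it is enough that each branch's limiting end component is $\Phi$-good; this is what lets the constraints decouple. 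Establishing this clean decoupling, and proving the resulting combinatorial conditions are both necessary and sufficient, is the heart of the proof, whereas the surrounding reductions and the polynomial-time bound follow from standard end-component computation, which is purely graph-theoretic and thereby delivers the theorem's promise of avoiding linear programming.
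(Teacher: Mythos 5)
Your architecture is sound and genuinely different from the paper's. The paper never fully determinizes the automata: it takes nondeterministic B\"uchi automata, forms the product of $M$ with their \emph{naive determinizations} (plain subset construction, no acceptance condition), and then invokes the machinery of \cite{CY98}, which supplies for each subset $R \subseteq \Phi \cup \Psi$ a set of target states $T_R$ with two properties: (I) from any state of $T_R$ a finite-memory strategy $\mu_R$ satisfies all of $R$ almost surely, and (II) under any strategy, trajectories satisfying all of $R$ hit $T_R$ infinitely often with probability $1$. The algorithm then iteratively prunes states that cannot reach $T_\Phi$ (together with actions leading to pruned states), and answers Yes iff the pruned MDP $M''$ is nonempty and contains a state of $T_{R_i}$ for each $R_i = \Phi \cup \{\psi_i\}$. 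Your deterministic-product-plus-accepting-end-component classification is the standard modern substitute for the $T_R$ machinery, and your iterative ``allowed region'' computation coincides with the paper's pruning; what the paper's choice buys is avoiding Safra-style determinization entirely (only a subset construction is needed), while yours buys conceptual transparency. Both are polynomial in $|M|$ for a fixed query, which is all the theorem asserts.

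There are, however, two concrete soft spots in your step three/four. First, your step four checks reachability, inside the allowed region, of an end component good for $\psi_j$ \emph{alone}; this check is genuinely wrong as stated. An end component good for $\psi_j$ but bad for some $\varphi \in \Phi$ can lie inside the allowed region (its states may be able to \emph{exit} toward $\Phi$-good components, so they survive pruning), yet absorbing into it kills the almost-sure conjuncts on a positive-probability branch, and merely passing through it does not secure $\psi_j$. The correct condition --- which you do state in your final paragraph, contradicting your step four --- is reachability of an end component good \emph{jointly} for $\Phi \cup \{\psi_j\}$; this is exactly why the paper tests the sets $T_{R_i}$ with $R_i = \Phi \cup \{\psi_i\}$ rather than $T_{\{\psi_i\}}$. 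Second, your assembled witness ``drives that branch almost-surely into a component good both for $\psi_j$ and for all of $\Phi$''; this is an overclaim, since reachability of the target component within the allowed region yields only \emph{positive-probability} success, and almost-sure reachability of a specific jointly good component may simply be impossible. The fix is a fallback: each branch attempts the jointly good target, and on failure defaults (while never leaving the allowed region) to absorption into some $\Phi$-good component, so every branch still satisfies $\Phi$ almost surely while the $\psi_j$-branch succeeds with some $p_j > 0$. The paper's two-phase witness implements precisely this: a random walk inside the pruned $M''$ (all of whose bottom strongly connected components meet $T_\Phi$) that, at each visit to $T_\Phi$ or $T_{R_i}$, switches with some positive probability to $\mu_\Phi$ or $\mu_{R_i}$, so failed attempts automatically re-enter the default behavior. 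Without this fallback your construction either cannot guarantee the slivers succeed or sacrifices almost-sure satisfaction of $\Phi$ on the failed slivers.
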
 

\begin{proof} 
By the discussion in Section \ref{sec:basics}, 
it suffices to consider the case where
we are given an MDP, $M$, and two sets of 
$\omega$-regular properties $\Phi, \Psi$, and we want a
strategy $\sigma$ such that
\[  \bigwedge_{\varphi \in \Phi} \mbox{Pr}^\sigma_u(\varphi) = 1 \wedge 
\bigwedge_{\psi \in \Psi}  \mbox{Pr}^\sigma_u(\psi) > 0\]

\vspace*{-0.04in}

\noindent Assume the properties in $\Phi$, $\Psi$ are all given by 
(nondeterministic) B\"{u}chi automata $A_i$.
We will use and build on results 
in \cite{CY98}.  In \cite{CY98} (Lemma 4.4, page 1411) 
it is shown that we can construct 
from $M$ and from a collection $A_i$, $i=1,\ldots,m$, of
B\"{u}chi automata,
a new MDP $M'$ (a refinement of $M$)
which is the ``product'' of 
$M$ with the {\em naive determinization}
of all the $A_i$'s 
(i.e., the result of applying the standard subset construction
on each $A_i$, without imposing any acceptance 
condition).  Technically, we have to slightly 
adapt the constructions 
of \cite{CY98}, which use the convention that MDP states 
are either purely controlled or purely probabilistic,
to the convention used in this paper which combines both
control and probabilistic behavior at each state.  But these 
adaptations are straightforward.
For completeness, we recall the (adapted) formal definition of $M'$. 
The states of the MDP $M'$ are tuples $(x,z_1,\ldots,z_m)$, where
$x$ is a state of the MDP, $M$, and $z_i$ is a {\em set of states} of
$A_i$.  The transition relation $\delta'$ of $M'$  is as follows.
There exists a transition $((x,z_1,...,z_m),a,p,(x',z'_1,\ldots,z'_m))
\in \delta'$
if and only if the transition $(x,a,p,x')$ is in $M$
and, for each $i=1,\ldots,m$, $z'_i$  
is precisely the set of states in the B\"{u}chi automaton $A_i$ 
that one could
reach with one transition, starting from some state in the set $z_i$
and reading the symbol $l(x')$.
Technically, we also have to add a dummy initial state $x_0$ to the MDP, $M$,
such that there is a single enabled action, $\gamma_0$, at $x_0$,
and such that there are 
transitions from $x_0$ on action $\gamma_0$ 
to other states according to some initial
probability distribution on states, $\alpha \in {\mathcal D}(V)$.
Thus, in particular, if we assume 
there is just one initial state $u$ in the MDP, $M$, 
then we would now have one transition $(x_0,\gamma_0,1,u) \in \delta$ 
in the new $M$
with added dummy state $x_0$.
The reason for adding the dummy $x_0$ is because
our definition of the product $M'$ 
does not use the label of the initial state in defining the
transitions of $M'$.
We also assume, w.l.o.g., that each B\"{u}chi automaton $A_i$  
has a single initial state $s_0^i$.
In this way, the initial state of $M'$ becomes the tuple 
$v_0 = (x_0,\{s_0^1\},\ldots,\{s_0^m\})$.

By Lemma 4.4 and 4.5 of \cite{CY98}, this MDP $M'$ has the following 
two properties.
For every subset $R$ of $\Phi \cup \Psi$ there is a subset 
$T_R$ of corresponding ``target  states''
of $M'$ (and we can compute this subset efficiently, in time
polynomial in the size of $M'$) that 
satisfies the following two conditions:

\begin{enumerate}[(a)]
\item[(I)] If a trajectory of $M'$ hits a state in $T_R$ at some point,  
then we can apply from that point on a strategy $\mu_R$ (which is
deterministic but uses memory)
which ensures that the resulting infinite trajectory satisfies all properties
in $R$ almost surely (i.e., with conditional probability 1, conditioned on the initial prefix that hits $T_R$). 
\item[(II)] For every strategy, the set of trajectories that satisfy all properties in $R$ 
and do not infinitely often hit some state of $T_R$ has probability 0.
\end{enumerate}
We now outline the algorithm for deciding qualitative multi-objective
queries.
\begin{enumerate}[(1)]
\item Construct the MDP $M'$ from $M$ and from the properties $\Phi$ and 
$\Psi$ (in other words, using one automaton for each property in $\Phi$ 
and one for each property in $\Psi$).

\item Compute $T_\Phi$, and compute for each property $\psi_i \in \Psi$ 
the set of states $T_{R_i}$ where $R_i = \Phi \cup \{\psi_i\}$.\footnote{
\label{foot:actually}Actually these sets can all be computed together: we can compute
maximal {\em closed}
components of the MDP, determine the properties that each component {\em 
favors} 
(see Def. 4.1 of \cite{CY98}), and tag each state with the sets for which it is a target state.}

\item If $\Phi \neq \emptyset$, prune $M'$ by 
identifying and removing all ``bad'' states by applying the following rules. 
\begin{enumerate}
\item All states $v$ that cannot ``reach''
any state in $T_\Phi$ are ``bad''.\footnote{By ``reach'', we
mean that starting at the state $v = v_0$, there a sequence of transitions
$(v_i,\gamma,p_i,v_{i+1}) \in \delta$, $p_i > 0$, such that $v_n \in T_{\Phi}$ for some 
$n \geq 0$.} 
\item If for a state $v$  
there is an action $\gamma \in \Gamma_v$
such that there is a transition $(v,\gamma,p,v') \in \delta'$, $p >0$,
and $v'$ is bad, then remove $\gamma$ from $\Gamma_v$.
\item If for some state $v$, $\Gamma_v = \emptyset$, then mark $v$ as bad. 
\end{enumerate}
Keep applying these rules until no more states can be labelled bad and no more
actions removed for any state.

\item Restrict $M'$ to the reachable states (from the initial state $v_0$) 
that are not bad, and restrict their action sets to actions 
that have not been removed, 
and let $M''$ be the resulting MDP.
\item
If ($M''=\emptyset$ or  $\exists \psi_i \in \Psi$ such that $M''$
does not contain any state of $T_{R_i}$ )\\ \hspace*{0.1in} then return No. \\
Else return Yes.
\end{enumerate}

\noindent {\em Correctness proof:}
In one direction, suppose there is a strategy $\sigma$ such that
$ \bigwedge_{\varphi \in \Phi} \mbox{Pr}^\sigma_u(\varphi) = 1 \wedge 
\bigwedge_{\psi \in \Psi}  \mbox{Pr}^\sigma_u(\psi) > 0$.
First, note that there cannot be any finite 
prefix of a trajectory under $\sigma$ that hits a state
that cannot reach any state in $T_\Phi$.
For, if there was such a path, then all trajectories that start with this 
prefix would
go only finitely often through $T_\Phi$. Hence (by property (II) above)
almost all these trajectories do not satisfy all properties in $\Phi$,
which contradicts the fact that 
all these properties have probability 1 under $\sigma$.
From the fact that no path under $\sigma$ hits a state that cannot reach 
$T_\Phi$,
it follows by an easy induction that no finite 
trajectory under $\sigma$ hits any bad 
state.
That is, under $\sigma$ all trajectories stay in the sub-MDP $M''$.
Since every property $\psi_i \in \Psi$ has probability 
$\mbox{Pr}^\sigma_u(\psi_i)>0$ and
almost all trajectories that satisfy $\psi_i$ and $\Phi$ must hit a state of 
$T_{R_i}$ 
(property (II) above),
it follows that $M''$ contains some state of $T_{R_i}$ for each 
$\psi_i \in \Psi$.  Thus the algorithm returns Yes.

In the other direction, suppose that the algorithm returns Yes. 
First, note that for all states $v$
of $M''$, and all enabled actions $\gamma \in \Gamma_v$ in $M''$,
all transitions $(v,\gamma,p,v') \in \delta$, $p>0$ of  $M'$
must still be in $M''$ (otherwise, $\gamma$ would have been
removed from $\Gamma_v$ at some stage using rule 3(b)).
On the other hand, some states may have some missing actions in $M''$.
Next, note that all bottom strongly connected components (BSCCs) of $M''$
(to be more precise, in the underlying one-step reachability graph of $M''$)
contain a state of $T_\Phi$ (if $\Phi = \emptyset$ then all states are 
in $T_\Phi$), for otherwise the states in these BSCCs 
would have been eliminated
at some stage using rule 3(a). 

Define the following strategy $\sigma$ which works in two phases.
In the first phase, the trajectory stays within $M''$.
At each control state take a random action that remains in $M''$ out of the
state; the probabilities do not matter, we can use any non-zero 
probability for all the remaining actions.
In addition, at each state, if the state is in $T_\Phi$ or it is in
$T_{R_i}$ for some property $\psi_i \in \Psi$, then with some nonzero
probability the strategy decides to terminate phase 1 and move to
phase 2 by switching to the strategy $\mu_\Phi$ or $\mu_{R_i}$ respectively,
which it applies from that point on.
(Note: a state may belong to several $T_{R_i}$'s, in which case each one
of them gets some non-zero probability - the precise value is unimportant.)
 
We claim that this strategy $\sigma$ meets the desired requirements -
it ensures probability 1 for all properties in $\Phi$ and 
positive probability for all properties in $\Psi$.
For each $\psi_i \in \Psi$, the MDP $M''$ contains some state of $T_{R_i}$;
with nonzero probability the process will follow a path to that state and
then switch to the strategy $\mu_{R_i}$ from that point on, 
in which case it will satisfy $\psi_i$ (property (I) above). 
Thus, all properties in $\Psi$ are satisfied with positive probability.

As for $\Phi$ (if $\Phi \neq \emptyset$), note that with probability 1
the process will switch at some point to phase 2, because all
BSCCs of $M''$ have a state in $T_\Phi$.
When it switches to phase 2 it applies strategy $\mu_\Phi$ or
$\mu_{R_i}$ for some $R_i = \Phi \cup \{\psi_i\}$, hence in either 
case it will
satisfy all properties of $\Phi$ with probability 1.
\end{proof}

\section{Quantitative multi-objective model checking.}

\label{sec:quan-model-checking}

\begin{theorem}
\label{thm:quan-multi}
\mbox{}
\begin{enumerate}[\em(1.)]
\item Given an MDP $M$, an initial state $u$, 
and a quantitative multi-objective query $B$,
we can decide whether there exists a strategy $\sigma$ 
that satisfies $B$, and if so construct such a strategy,
in time polynomial 
in $|M|$. 

\item Moreover, given $\omega$-regular properties 
$\Phi = \langle \varphi_1,\ldots,\varphi_k \rangle$,
we can construct an $\epsilon$-approxi\-mate Pareto curve 
$P_{M_u,\Phi}(\epsilon)$, for the set of achievable
probability vectors $U_{M_u,\Phi}$
in time polynomial in $M$ and in $1/\epsilon$.
\end{enumerate}
\end{theorem}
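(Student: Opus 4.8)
The plan is to reduce quantitative multi-objective model checking to the multi-objective \emph{reachability} problem of Section~\ref{sec:reach} and then invoke Corollary~\ref{cor:ach-and-pareto}. First, by the boolean manipulations and complementation described in Section~\ref{sec:basics}, deciding an arbitrary quantitative query $B$ reduces to deciding a fixed number of extended achievability queries of the form~(\ref{form:quant-query}), so it suffices to decide, given $\omega$-regular properties and thresholds, whether there is a strategy $\sigma$ with $\bigwedge_i \Pr^\sigma_u(\varphi_i) \geq r_i \wedge \bigwedge_j \Pr^\sigma_u(\psi_j) > r'_j$. I would then build the product MDP $M'$ of $M$ with the naive determinizations of B\"uchi automata for all the properties, exactly as in Theorem~\ref{thm:qual-multi} and \cite{CY98}, and compute, for every subset $R$ of the properties, the target set $T_R$ together with the deterministic a.s.-winning continuation strategy $\mu_R$ guaranteed by properties (I) and (II). Since the query is fixed for the purposes of model complexity, enumerating the subsets $R$ and constructing $M'$ costs only time polynomial in $|M|$.

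Next I would construct a multi-objective reachability instance $N$ from $M'$: for each subset $R$ introduce a fresh absorbing target state $g_R$, and at every state $s \in T_R$ add an action leading with probability $1$ to $g_R$; the reachability targets are $F_i := \{ g_R : \varphi_i \in R \}$, and analogously for the $\psi_j$. The core claim is that the achievable sets coincide, $U_{M_u,\Phi} = U_{N,\langle \Diamond F_i \rangle}$. Soundness (reachability $\Rightarrow$ $\omega$-regular) is the easy direction: given a strategy on $N$ that reaches $g_R$, the corresponding strategy on $M$ follows it until the committing action $s \to g_R$ and then switches to $\mu_R$, which by property~(I) satisfies every property in $R$ with conditional probability exactly $1$; hence $\Pr^\sigma_u(\varphi_i)$ equals the probability of reaching $F_i$, and strict inequalities are preserved.

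The delicate direction, and the main obstacle, is completeness: turning an arbitrary $\omega$-regular strategy $\sigma$ with $\Pr^\sigma_u(\varphi_i) \geq r_i$ into a reachability strategy $\tau$ achieving the same vector. Here I would use the fact that, almost surely, a trajectory of $\sigma$ settles into a limiting end component $D$ of $M'$, that the set $\mathit{Sat}(D)$ of properties it satisfies is almost surely determined by $D$, and that by property~(II) such a trajectory almost surely hits $T_{\mathit{Sat}(D)}$ infinitely often and in particular reaches it. The difficulty is that a strategy cannot know in advance into which component it will settle, so one cannot simply commit at the first visit to an arbitrary $T_R$. I would resolve this by combining a.s.-convergence-to-end-components with a stopping-time construction: $\tau$ mimics $\sigma$ and, conditioned on having settled into $D$ and reached $T_{\mathit{Sat}(D)}$, takes the committing action to $g_{\mathit{Sat}(D)}$, so that the settling probabilities transfer coordinate-wise and $\Pr^\tau_u(\Diamond F_i) \geq \Pr^\sigma_u(\varphi_i) \geq r_i$. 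Because, by Theorem~\ref{thm:red-to-multi-LP}, the reachability achievable set is a closed polyhedral set attained by memoryless strategies, the strict-threshold constraints on the $\psi_j$ are then handled exactly by the extended-achievability LP of Corollary~\ref{cor:ach-and-pareto}(a).

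Finally, for part~(1) I would run the LP of Corollary~\ref{cor:ach-and-pareto}(a) on $N$ to decide each extended achievability query and extract a memoryless reachability strategy, which composes with the continuation strategies $\mu_R$ into a (finite-memory, randomized) strategy for $B$ on $M$; the total running time is polynomial in $|M|$. For part~(2), since $U_{M_u,\Phi} = U_{N,\langle \Diamond F_i \rangle}$, any $\epsilon$-approximate Pareto curve for the reachability instance $N$ is simultaneously one for $\Phi$; thus Corollary~\ref{cor:ach-and-pareto}(b), via \cite{PapaYan00}, computes such a curve in time polynomial in $|N|$, hence in $|M|$, and in $1/\epsilon$, with each returned vector realized by an explicit strategy obtained as above.
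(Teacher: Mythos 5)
Your overall architecture is the paper's: the reduction to extended achievability queries, the product $M'$ with naive determinizations, the target sets $T_R$ with continuation strategies $\mu_R$ satisfying (I) and (II), the augmented MDP with fresh absorbing states and targets $F_i = \{ g_R \mid \varphi_i \in R\}$, the easy (reachability $\Rightarrow$ $\omega$-regular) direction via switching to $\mu_R$ at the commit point, and the final appeal to Corollary \ref{cor:ach-and-pareto} are all exactly as in the paper. The gap is in the completeness direction. The object you describe --- ``$\tau$ mimics $\sigma$ and, conditioned on having settled into $D$ and reached $T_{\mathit{Sat}(D)}$, takes the committing action to $g_{\mathit{Sat}(D)}$'' --- is not a strategy: whether the run has settled into a given end component, and hence what $\mathit{Sat}(D)$ will be, are tail events, not measurable with respect to any finite history, so no strategy can branch on them. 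At any finite visit to a state of $T_R$ the run may still with positive probability go on to satisfy a different set of properties, and committing is irreversible; this is precisely the difficulty you yourself name, and the ``stopping-time construction'' that is supposed to resolve it is never actually defined. The most such an argument yields directly is an approximation statement: for every $\epsilon > 0$ one can choose a stopping time after which committing is correct except with probability $\epsilon$, giving achievability of $r - \epsilon\mathbf{1}$ in the reachability instance; to conclude that $r$ itself is achievable you would then still need closedness of $U_{M''_{u^*},\overline{F}}$ (which does hold, as it is a polyhedral projection by Theorem \ref{thm:red-to-multi-LP}) to pass to the limit. But your write-up invokes closedness only for the strict-threshold bookkeeping, not to close this limit, so the central claim $\Pr^\tau_u(\Diamond F_i) \geq \Pr^\sigma_u(\varphi_i)$ is unsupported as stated.

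The paper sidesteps the strategy-transfer problem entirely, and it is worth seeing how. Its completeness direction is nonconstructive: if some $r$ were achievable in $M$ but not in $M''$, then since the reachability achievable set is convex and downward closed (Theorem \ref{thm:red-to-multi-LP}), there is a separating hyperplane with nonnegative weights $w$. One then compares the scalarized objective $\sum_i w_i \Pr^\sigma_u(\varphi_i)$ on $M$ --- a reward structure ``paid at infinity'' depending on which properties the trajectory satisfies --- with the classical transition reward on $M''$ paid on the committing moves $(u,\gamma_R,1,s_R)$, and cites \cite{CY98} for the fact that these two MDPs have the same optimal expected reward, attained by a deterministic finite-memory strategy. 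That equality contradicts the separation, finishing the proof. In other words, the hard measure-theoretic content --- that committing at the sets $T_R$ loses nothing even against arbitrary strategies --- is exactly what the \cite{CY98} single weighted-objective result encapsulates. To repair your proof you should either invoke that result for the scalarized problem, as the paper does, or genuinely carry out the stopping-time-plus-limit argument sketched above in place of your third paragraph.
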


\begin{proof} 
For (1.), by the discussion in Section \ref{sec:basics}, we 
only need to  consider extended achievability queries,
$B \equiv \bigwedge^{k'}_{i=1} \Pr^\sigma_u(\varphi_i) \geq r_i
\wedge \bigwedge^{k}_{j=k'+1} \Pr^\sigma_u(\varphi_j) > r_j $,
where $k \geq k' \geq 0$, and
for a vector $r \in (0,1]^{k}$.
Let $\Phi = \langle \varphi_1,\ldots,\varphi_{k} \rangle$.
We are going to reduce this multi-objective problem with objectives
$\Phi$ to the
quantitative multi-objective reachability
problem studied in Section \ref{sec:reach}. 
From our reduction, both (1.) and  (2.) will follow, using
Corollary \ref{cor:ach-and-pareto}.
As in the proof of Theorem \ref{thm:qual-multi}, we will
build on constructions from \cite{CY98}:   
form the MDP $M'$ consisting of the product of $M$ with the
naive determinizations of the automata $A_i$ for the properties 
$\varphi_i \in \Phi$.
For each subset $R \subseteq \Phi$ we determine the corresponding subset 
$T_R$ of target states in $M'$.\footnote{Again, we don't need to compute these 
sets separately.  See Footnote \ref{foot:actually}.}

Construct the following MDP $M''$. 
Add to $M'$ a new absorbing state $s_R$ for each subset $R$ of $\Phi$.
For each state $u$ of $M'$ and each maximal subset $R$
such that $u \in T_R$ add a new action $\gamma_{R}$ to $\Gamma_u$,
and a new transition $(u,\gamma_{R},1,s_R)$ to $\delta$.
With each property $\varphi_i \in \Phi$ we associate the subset
of states $F_i =\{ s_R \mid \varphi_i \in R \}$.
Let $\overline{F} = \langle \Diamond F_1, \ldots, \Diamond F_k \rangle$.
Let $u^*$ be the initial state of the product MDP $M''$, given by 
the start state $u$ of $M$ and the start states of all the 
naively determinized $A_i$'s.
Recall that $U_{M_u,\Phi} \subseteq [0,1]^k$ 
denotes the achievable set
for the properties $\Phi$ in $M$ starting at $u$, and
that $U_{M''_{u^*},\overline{F}}$ denotes the achievable set for $\overline{F}$
in $M''$ starting at $u^*$.
\begin{lemma}
\label{lem:quant-ach-same}
$U_{M_u, \Phi} = U_{M''_{u^*},\overline{F}}$.
Moreover, from a strategy $\sigma$ that achieves $r$ in $U_{M_u,\Phi}$,
we can recover a strategy $\sigma'$ that achieves $r$ in 
$U_{M''_{u^*},\overline{F}}$, and vice versa.
\end{lemma}
\begin{proof}
One direction is easy. Given such a strategy $\sigma'$ in $M''$, we follow in 
$M'$ (and in $M$)
the same strategy (of course, 
only the first component of states of $M''$ matters in $M$),
until just before it transitions to a state $s_R$,
at which point it must be in $T_R$, and at that point our
strategy $\sigma$
switches to the strategy $\mu_R$.  This guarantees, for 
every $\varphi_i \in \Phi$, 
$Pr^\sigma_u(\varphi_i) \geq \Pr^{\sigma'}_{u^*}(\Diamond F_i) \geq r_i$.

For the other direction, suppose that the claim is not true,
i.e. there is a strategy $\sigma$ in $M$ which ensures
probability $\Pr^{\sigma}_u(\varphi_i) \geq r_i$, $i=1,\ldots, k$,
but $r \not \in U_{M''_{u^*},\overline{F}}$.
Note that all states in $F = \cup^k_{i=1} F_i$ are absorbing.
From
Theorem  \ref{thm:red-to-multi-LP}
we know that $U_{M''_{u^*},\overline{F}} = U_{M''_{u^*},\overline{F}}^{\odot}$  
where $U_{M''_{u^*},\overline{F}}^\odot$ is the
set of value vectors achievable by memoryless strategies.
Recall, 
that $U_{M''_{u^*},\overline{F}} = U_{M''_{u^*},\overline{F}}^\odot$ is convex, 
and that it is downward-closed.
Since $r \not\in U_{M''_{u^*},\overline{F}}$, 
as in the proof of (3.) $\Rightarrow$ (1.) in Thm. \ref{thm:red-to-multi-LP},
there must be a separating hyperplane, i.e., a non-negative weight vector 
$w= \langle w_1, \ldots , w_k \rangle$ such that
$w^Tr = \sum^k_{i=1} w_i r_i > w^Tx$ for every point $x \in U_{M''_{u^*},\overline{F}}$.

Consider $M$ with the following reward structure, denoted $rew(w)$:
a trajectory $\tau$ of $M$ receives reward 
$\sum \{ w_i \mid \tau ~{\rm satisfies}~\varphi_i \}$.
This is not the traditional type of reward structure where
reward is obtained at the states and transitions of
the trajectory; it is obtained only at infinity 
when the trajectory has finished and we get a reward
that depends on the properties that were satisfied.
In \cite{CY98} optimization of the expected reward
for MDPs with this kind of reward structure was studied
and solved by
reducing the problem to an MDP with a classical type of reward.
We reuse that construction here.
Consider the MDP $M''$ augmented with a traditional type of reward structure,
denoted $rew''$,
in which each transition of the form $(u,\gamma_R,1, s_R)$ 
produces reward $\sum \{ w_i \mid \varphi_i \in R \}$,
while all other transitions (and states and actions) give 0 reward.  
Let ${\hat M}''$ be a subMDP of $M''$ that
contains for each state $u$ only one (at most) transition
of the form $(u,\gamma_R,1,s_R)$, 
namely the one that produces the maximum reward (breaking ties arbitrarily).
Clearly, there is no reason ever to select from a state $u$ any
transition $(u,\gamma_{R'},1,s_{R'})$ that produces lower reward, thus,
$M''$ and ${\hat M}''$ have the same optimal expected reward.
It is shown in \cite{CY98} that the optimal expected rewards
in $(M,rew(w))$ and $({\hat M}'', rew'')$, and thus also in $(M'', rew'')$,
are equal to each other. Moreover, the optimum value
in these MDPs is achievable, i.e., there are optimal strategies,
and in fact  a
deterministic finite-memory optimal strategy can be constructed.

The optimal expected reward in $(M,rew(w))$ 
is at least $w^Tr$ (because strategy $\sigma$ achieves $w^Tr$),
while the optimal expected reward in $(M'', rew'')$ is 
equal to $\max \{ w^Tx \mid x \in U_{M''_{u^*},\overline{F}} \}$, 
because rewards are only
obtained by transitioning to a state in $F$.
Therefore, $w^Tr \leq \max \{ w^Tx \mid x \in U_{M''_{u^*},\overline{F}} \}$,
contradicting our hypothesis that $w^Tr > \max \{ w^Tx \mid x \in U_{M''_{u^*},\overline{F}} \}$. 
\end{proof}

It follows from the lemma that: there exists a strategy $\sigma$ in $M$
such that 
\[\bigwedge^{k'}_{i=1} \mbox{Pr}^\sigma_u(\varphi_i) \geq r_i
\wedge \bigwedge^{k}_{j=k'+1} \mbox{Pr}^\sigma_u(\varphi_j) > r_j \]
if and only if
there exists a strategy $\sigma'$ 
in $M''$ such that 
\[\bigwedge^{k'}_{i=1} \mbox{Pr}^\sigma_{u^*}(\Diamond F_i) \geq r_i
\wedge \bigwedge^{k}_{j=k'+1} \mbox{Pr}^\sigma_{u^*}(\Diamond F_j) > r_j\,.\]
Moreover, such strategies can be recovered from each other.
Thus (1.) and  (2.) follow, using Corollary \ref{cor:ach-and-pareto}.
\end{proof}

\vspace*{-0.23in}

\section{Concluding remarks}

\vspace{-0.08in}

We mention that recent results by Diakonikolas and Yannakakis
\cite{DY08} provide improved upper bounds for appoximation
of convex Pareto curves, and for computing a 
smallest such approximate convex Pareto set.
These results yield significantly improved algorithms,
particularly in the bi-objective case,
for the multi-objective LP problem,
and thus also for the multi-objective MDP problems studied in this paper.
In particular, in the bi-objective MDP case, \cite{DY08} provides 
a polynomial time algorithm
to compute a minimal $\epsilon$-approximate (convex) Pareto set
(i.e., one with the fewest number of points possible).

We mention that,
although  we use LP methods to
obtain our complexity upper bounds, 
in practice there is a way to combine other efficient iterative  
methods used for solving MDPs, e.g., based on value iteration
or policy (strategy) iteration,
with our results in order
to approximate the Pareto curve for multi-objective
model checking.  This is 
because the results of \cite{PapaYan00,DY08} for multi-objective convex
optimization problems only 
require a black-box routine that optimizes (exactly or
approximately) positive linear combinations of the objectives.
Specifically, in our setting the multiple MDP objectives 
ask to optimize the probabilities of different linear-time $\omega$-regular
properties.   By using the results in \cite{CY98},
it is possible to reduce the problem of optimizing such 
positive linear combinations to the problem of 
finding the optimal expected reward for a new MDP with positive rewards. 
The task of computing or approximating this optimal
expected reward can be carried out  
using any of various standard iterative 
methods, e.g., based on value iteration and policy iteration
(see \cite{Puterman94}).  These can thus be used to answer
(exactly or approximately) the black-box queries required by the methods
of \cite{PapaYan00,DY08}, thereby yielding a method for approximating
the Pareto curve (albeit, without the same theoretical complexity
guarantees).

An important extension of the applications of our results 
is to extend the asymmetric assume-guarantee compositional reasoning rule
discussed in Section 
\ref{sec:basics}
to a general compositional framework for probabilistic systems.  
It is indeed possible to describe symmetric assume-guarantee rules 
that allow for general composition
of MDPs.  A full treatment of the general compositional framework
requires a separate paper, and we plan to expand on this in follow-up work.

\vspace*{0.04in}

\noindent{\bf Acknowledgements.}  We thank the Newton Institute, 
where we initiated discussions on the topics
of this paper during the Spring 2006 programme on Logic
and Algorithms. Several authors acknowledge support from the following grants:
EPSRC GR/S11107 and  EP/D07956X, MRL 2005-04; 
NSF grants CCR-9988322, CCR-0124077, CCR-0311326,
and ANI-0216467, BSF grant 9800096, Texas ATP grant 003604-0058-2003,
Guggenheim Fellowship;  NSF CCF-04-30946 and NSF CCF-0728736.

\bibliographystyle{alpha}
\bibliography{/home/kousha/Bibliography/nice_bib}

\end{document}